\DeclareMathOperator{\diag}{diag}
\renewcommand{\Re}{\mathrm{Re}}
\renewcommand{\Im}{\mathrm{Im}}
\newtheorem{proposition}{Proposition}
\newtheorem{lemma}{Lemma}
\newtheorem{corollary}{Corollary}
\newtheorem{remark}{Remark}
\newcommand{\R}{{\mathbb R}}
\newcommand{\C}{{\mathbb C}}
\newcommand{\oneton}{1,\dots,n}
\newcommand{\jj}{\mathbf j}
\newcommand{\ignore}[1]{}
\begin{document}
\title{Local pinning of networks of multi-agent systems with transmission and pinning delays}
\author{Wenlian Lu~\IEEEmembership{Senior Member,~IEEE} \and Fatihcan M. Atay\thanks{W. L. Lu is with the Centre for Computational Systems Biology and School of Mathematical Sciences, Fudan University, Shanghai, P. R. China (email: w.l.lu@ieee.org); F. M. Atay is with the Max Planck Institute for Mathematics in the Sciences, Leipzig, Germany (email: fatay@mis.mpg.de).}\thanks{The authors thank the ZiF (Center for Interdisciplinary Research) of Bielefeld University, where part of this research was conducted under the program \emph{Discrete and Continuous Models in the Theory of Networks}. FMA acknowledges the support of the European Union
(FP7/2007-2013)
under grant  \#318723 (MatheMACS). WLL acknowledges the support of the National Natural Sciences
Foundation of China under Grant (61273309)
and the Program for New Century Excellent Talents
in University (NCET-13-0139).}}

\maketitle
\begin{abstract}
We study the stability of networks of multi-agent systems with local pinning strategies and two types of time delays, namely the transmission delay in the network and the pinning delay of the controllers. Sufficient conditions for stability are derived  under specific scenarios by computing or estimating the dominant eigenvalue of the characteristic equation.
In addition, controlling the network by pinning a single node is studied. Moreover, perturbation methods are employed to derive conditions in the limit of small and large pinning strengths.
Numerical algorithms are proposed to verify  stability, and simulation examples are presented to confirm the efficiency of analytic results.
\end{abstract}

\section{Introduction}
Control problems in multi-agent systems have been attracting attention in diverse contexts \cite{DeG}--\cite{Li04}. In the consensus problem, for example, the objective is to make all agents converge to some common state by designing proper algorithms \cite{Reynolds87}-\cite{Fax},
such as the linear consensus protocol
\begin{equation}
\dot{x}_{i}=-\sum_{j=1}^{n}L_{ij}x_{j}(t),~i=\oneton. \label{ma}
\end{equation}
Here, $x_{i}\in \mathbb{R}$ is the state of agent $i$ and $L_{ij}$ are the components of the Laplacian matrix $L$,
satisfying $L_{ij}\le 0$ for all $i\ne j$ and $L_{ii}=-\sum_{j\ne i}L_{ij}$.
The Laplacian is associated with the underlying graph $\mathcal{G}$, whose links can be directed and weighted. It can be shown that, if the underlying graph has a spanning tree, then all agents converge to a common number, which depends on the initial values  \cite{DeG,Jadbabaie,Fax}.
On the other hand, if it is desired to steer the system to a prescribed consensus value, auxiliary control strategies are necessary. Among these, \emph{pinning control} is particularly attractive  because it is easily realizable by controlling only a few
agents, driving them to the desired value $s$ through feedback action:
\begin{equation}
\dot{x}_{i}=-\sum_{j=1}^{n}L_{ij}x_{j}(t)-\delta_{\mathcal D}(i)c(x_{i}-s),\quad i=\oneton, \label{pin}
\end{equation}
where $\mathcal D$ denotes the subset of agents  where feedback is applied, with cardinality $|\mathcal D|=m$, $\delta_{\mathcal D}(i)$ is the indicator function (1 if $i\in \mathcal{D}$ and 0 otherwise), and $c>0$ is the pinning strength. Eq. (\ref{pin}) provides the local strategy that pins a few nodes to stabilize the whole network at a common desired value. The following hypothesis is natural in pinning problems and assumed in this paper.

\textbf{(H)} Each strongly connected component of $\mathcal G$ without incoming links from the outside has at least one node in $\mathcal D$.

The following result is proved in \cite{CLL,LLR}.

\begin{proposition}\label{thm0}
If (H) holds, then system (\ref{pin}) is asymptotically stable at $x_{i}=s$  $\forall i$.
\end{proposition}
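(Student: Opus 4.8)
The plan is to linearize about the target state and reduce the assertion to a spectral statement about a single matrix. Introducing the error $e_i = x_i - s$ and writing $D = \diag(\delta_{\mathcal D}(1),\dots,\delta_{\mathcal D}(n))$, I would exploit the zero-row-sum property $\sum_j L_{ij}=0$ to cancel the constant $s$: since $\sum_j L_{ij}x_j = \sum_j L_{ij}e_j$, system (\ref{pin}) becomes
\[
\dot e = -(L + cD)\,e =: -M e.
\]
Hence $x_i\equiv s$ is asymptotically stable if and only if $-M$ is Hurwitz, i.e. every eigenvalue of $M = L + cD$ has strictly positive real part, and this is the statement I will establish.

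First I would record the sign structure of $M$. Because $L_{ij}\le 0$ for $i\ne j$ and $L_{ii}=\sum_{j\ne i}|L_{ij}|$, the matrix $M$ has nonpositive off-diagonal entries (it is a $Z$-matrix), its $i$-th diagonal entry equals $R_i + c\,\delta_{\mathcal D}(i)$ with $R_i=\sum_{j\ne i}|M_{ij}|$, and every row sum equals $c\,\delta_{\mathcal D}(i)\ge 0$. Gershgorin's theorem then confines the spectrum to discs whose leftmost points lie at $c\,\delta_{\mathcal D}(i)\ge 0$, so $\Re\lambda\ge 0$ for all eigenvalues. Writing moreover $M=\sigma I - B$ with $B\ge 0$ and $\sigma$ large, Perron--Frobenius shows that the eigenvalue of $M$ of smallest real part is real and equals $\sigma-\rho(B)$. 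The whole problem therefore collapses to proving that this minimal eigenvalue is nonzero: it suffices to show that $M$ is \emph{nonsingular}, for then the (real) minimal-real-part eigenvalue is both $\ge 0$ and $\ne 0$, forcing $\Re\lambda>0$ throughout.

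The heart of the argument, and the step where hypothesis (H) enters, is this nonsingularity, which I would prove by a discrete maximum principle. Suppose $Mv=0$ with $v\ne 0$; since $M$ is real we may take $v$ real, and after possibly replacing $v$ by $-v$ we may assume $\max_i v_i\ge 0$. Rewriting the $i$-th equation as $(R_i + c\,\delta_{\mathcal D}(i))v_i = \sum_{j\ne i}|M_{ij}|\,v_j$ and evaluating it at an index $k$ attaining the maximum yields $c\,\delta_{\mathcal D}(k)v_k\le 0$ together with $v_j=v_k$ for every in-neighbour $j$ of $k$. Consequently, if the maximum were strictly positive, the set $S=\{i:v_i=\max\}$ would contain no pinned node and would be closed under taking in-neighbours, hence receive no edges from outside $S$. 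Tracing edges backward from any node of $S$ then locates a strongly connected component of $\mathcal G$ that lies inside $S$ and has no incoming links from the outside, yet contains no node of $\mathcal D$ --- contradicting (H). Therefore $\max_i v_i\le 0$, and the same reasoning applied to $-v$ gives $\min_i v_i\ge 0$, so $v\equiv 0$, a contradiction; thus $M$ is nonsingular.

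Combining the three steps yields $\Re\lambda>0$ for every eigenvalue of $M$, i.e. $-M$ is Hurwitz, which is the asserted asymptotic stability. I expect the genuinely delicate point to be the maximum-principle step: one must track edge directions carefully (the maximum propagates to in-neighbours, not out-neighbours) and verify that ``closed under in-neighbours with a strictly positive maximum'' really forces a source strongly connected component free of pinned nodes. Equivalently, this step may be phrased through the augmented graph obtained by adjoining a virtual node $0$ holding the value $s$, with links of weight $c$ into each pinned node: (H) is precisely the condition that node $0$ reaches every node, making $\{0\}$ the unique source component and the zero eigenvalue of the augmented Laplacian simple, so that its ``grounded'' block $M$ is a nonsingular $M$-matrix.
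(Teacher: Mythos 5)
Your proof is correct, but there is nothing in the paper to compare it against line by line: the paper does not prove Proposition~\ref{thm0} at all, it simply cites \cite{CLL,LLR}. Your opening reduction---error coordinates $e=x-s$, cancellation of $s$ via the zero row sums, and the equivalence of asymptotic stability with the spectrum of $M=L+cD$ lying in the open right half-plane---is exactly the paper's Corollary~\ref{cor0} (which, note, says ``negative real parts''; with the paper's sign conventions that is a slip, since $\mathrm{tr}(L+cD)\ge 0$, and your ``positive'' is the correct statement). The remainder of your argument supplies precisely the proof that the paper outsources: the Z-matrix/Gershgorin bound giving $\Re\lambda\ge 0$, Perron--Frobenius to make the leftmost eigenvalue of $M$ real, and the discrete maximum principle with backward tracing to a source strongly connected component to rule out singularity under (H). This is an elementary, self-contained route; the cited references reach the same spectral conclusion through M-matrix theory applied to the Frobenius normal form of $L+cD$ (treating irreducible diagonal blocks with and without pinned nodes separately, often packaged with Lyapunov functions built from positive left eigenvectors), machinery chosen there because it extends to nonlinear node dynamics. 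Your version makes the role of (H) maximally transparent---a nonzero kernel vector could only sustain its maximum on an unpinned source component, which (H) forbids---and the two delicate points you flagged are handled correctly: equality in the estimate at a maximizing index propagates the value only to in-neighbours $j$ with $L_{kj}\ne 0$ (consistent with the paper's convention that $L_{ij}\ne 0$ encodes a link from $v_j$ to $v_i$), and the backward trace terminates at a source SCC because the condensation of $\mathcal G$ is a finite DAG.
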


In many networked systems, however, time delays inevitably occur due to limited information transmission speed; so Proposition~\ref{thm0} does not apply.
 In this paper we consider systems with both transmission and pinning delays,
\begin{equation}
\dot{x}_{i}=-\sum_{j=1,j\ne i}^{n}L_{ij}(x_{j}(t-\tau_{r})-x_{i}(t))-c\delta_{\mathcal D}(i)(x_{i}(t-\tau_{p})-s), \label{pintransdelay}
\end{equation}
for $i=\oneton$, where $\tau_{r}$ denotes the {\em transmission delay} in the network and $\tau_{p}$ is the {\em pinning delay} of the controllers.
Several recent papers have addressed the stability of consensus systems with various delays. It has been shown that consensus can be achieved under transmission delays if the graph has a spanning tree \cite{Olf04}-\cite{Fatihcan2}.
However, if a sufficiently large delay is present also in the self-feedback of the node's own state, then consensus may be destroyed \cite{Bli}; similar conclusions also hold in cases of time-varying topologies \cite{More}--\cite{Lu1} and heterogeneous delays \cite{Xiang}-\cite{And}.
The stability of pinning networks with nonlinear node dynamics have been studied in \cite{Wang02}--\cite{Song4}, \cite{LiuChen}--\cite{Song3}.
However, the role of pinning delay was considered in only a few papers \cite{LiuChen}--\cite{Song3}, where it was argued that stability can be guaranteed if the pinning delays are sufficiently small. Precise conditions on the pinning delay for stability, the relation to the network  topology, and the selection of pinned nodes have not yet been addressed.

In this paper, we study the stability of the model (\ref{pintransdelay}) under both transmission and pinning delays. 
First, we derive an estimate of the largest admissible pinning delay.
Next, we consider several specific scenarios and present numerical algorithms to verify stability by calculating the dominant eigenvalue of the system.
Included among the scenarios are the cases
when only a single node is pinned in the absence of transmission delay, or when the transmission and pinning delays are identical.
Finally, we use a perturbation approach to estimate
the dominant eigenvalue for very small and very large pinning strengths.

\section{Notation and Preliminaries}

A directed graph $\mathcal{G}=\{\mathcal V, \mathcal E\}$ consists of a node set $\mathcal{V}=\{v_{1},\dots,v_{n}\}$ and a  link set $\mathcal{E}\subseteq \mathcal{V}\times \mathcal{V}$.
A (directed) {\em path} of length $l$ from node $v_{j}$ to $v_{i}$, denoted $(v_{r_{1}},\dots,v_{r_{l+1}})$, is a sequence of $l+1$ distinct vertices with $v_{r_{1}}=v_{i}$ and $v_{r_{l+1}}=v_{j}$ such that $(v_{r_{k}},v_{r_{k+1}})\in \mathcal{E}$ for $k=1,\dots,l$.
The graph is called strongly connected if there is a directed path from any node to any other node, and it is said to have a spanning tree if there is a node $v_{p}\in\mathcal V$ such that for any other node $j$ there is a path
from $v_{p}$ to $v_j$.

We denote the imaginary unit by $\mathbf{j}$ and the $n\times n$ identity matrix by $I_n$.
For a matrix $L$, $L_{ij}$ denotes its $(i,j)^{th}$ element and $L^{\top}$ its transpose.
The Laplacian matrix $L$ is associated with the graph $\mathcal G$ in the sense that there is a link from $v_{j}$ to $v_{i}$ in $\mathcal{G}$ if and only if  $L_{ij}\neq 0$.
We denote the eigenvalues of $L$ by $\{\theta_1,\dots,\theta_n\}$.
Recall that zero is always an eigenvalue, with the corresponding eigenvector $[1,\dots,1]^\top$,
and $\Re(\theta_i)> 0$ for all nonzero eigenvalues  $\theta_ i$.
Furthermore, if the graph $\mathcal{G}$ is strongly connected (or equivalently, if $L$ is irreducible), then zero is a simple eigenvalue of $L$.
The diagonal element $L_{ii}$ is the {\em weighted in-degree} of node $i$.
Let $K=\diag\{L_{11},\dots,L_{nn}\}$ be the diagonal matrix of in-degrees and $A=K-L$.
Let $y_{i}=x_{i}-s$, $y=[y_{1},\dots,y_{n}]^{\top}$, and $D=\diag\{d_{1},\cdots,d_{n}\}$ with $d_{i}=\delta_{\mathcal{D}}(i)$.
System (\ref{pintransdelay}) can be rewritten as
\begin{equation}
\dot{y}=-Ky+Ay(t-\tau_{r})-cDy(t-\tau_{p}).\label{pintransdelayM}
\end{equation}
Considering solutions in the form $y(t)=exp(\lambda t)\xi$ with $\lambda\in\C$ and $\xi\in\C^{n}$, the characteristic equation of (\ref{pintransdelayM}) is obtained as
\begin{equation}
\chi(\lambda) :=  \det\left[\lambda I_{n}+K-A\exp(-\lambda\tau_{r})+cD\exp(-\lambda\tau_{p})\right]=0.\label{chpintransdelay}
\end{equation}The asymptotic stability of (\ref{pintransdelayM}) is equivalent to all characteristic roots $\lambda$ of (\ref{chpintransdelay}) having negative real parts.
The root having the largest real part will be termed as the dominant root or the dominant eigenvalue.
For the undelayed case, Proposition~\ref{thm0} can be equivalently stated as follows.

\begin{corollary} \label{cor0}
If (H) holds, then all eigenvalues of $L+cD$ have negative real parts.
\end{corollary}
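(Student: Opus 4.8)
The plan is to recognize the corollary as nothing more than the spectral restatement of Proposition~\ref{thm0} in the undelayed limit, so the whole argument is a short specialization of the setup already in place. First I would set $\tau_r=\tau_p=0$ in the matrix form (\ref{pintransdelayM}). Using $A=K-L$, the three terms collapse and (\ref{pintransdelayM}) becomes the linear autonomous system $\dot y=-(L+cD)\,y$; correspondingly, the characteristic function (\ref{chpintransdelay}) reduces to $\chi(\lambda)=\det[\lambda I_n+L+cD]$. I would also record that the change of variables $y=x-s\mathbf 1$ producing (\ref{pintransdelayM}) is exact precisely because $L$ has zero row sums, so shifting by the target value $s$ introduces no constant forcing term; consequently asymptotic stability of $y=0$ is literally the statement ``$x_i\to s$ for all $i$'' of Proposition~\ref{thm0}.

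Next I would invoke the standard equivalence for linear time-invariant systems: the origin of $\dot y=My$ is asymptotically stable if and only if every eigenvalue of the coefficient matrix $M$ has negative real part. Reading off the coefficient matrix from the undelayed form above and comparing it with (\ref{chpintransdelay}) converts the qualitative conclusion of Proposition~\ref{thm0} into the quantitative spectral condition on $L+cD$ asserted by the corollary. The sign bookkeeping relating the roots of $\chi$ to the spectrum of $L+cD$ is the only routine computation, and I would carry it out explicitly to fix the direction of the inequality.

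With these two reductions, the corollary is immediate: hypothesis (H) is exactly the hypothesis of Proposition~\ref{thm0}, which therefore guarantees asymptotic stability of the undelayed system, and the equivalence above then forces every eigenvalue of $L+cD$ to have negative real part, as asserted.

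The one delicate point---and the step I expect to be the main obstacle---is strictness, i.e.\ ruling out eigenvalues exactly on the imaginary axis. The uncontrolled Laplacian $L$ always carries the marginal eigenvalue $0$ along the consensus direction $\mathbf 1$, and more generally any component receiving no pinning could contribute a boundary mode; so one must verify that (H) forces the diagonal perturbation $cD$ to move every such mode strictly into the open half-plane, upgrading the merely marginal stability of the pure consensus dynamics to asymptotic stability. I would handle this by tracing precisely where (H) is used in the proof of Proposition~\ref{thm0} in \cite{CLL,LLR}: the requirement that every strongly connected component without incoming links contain at least one node in $\mathcal D$ is exactly what excludes a nontrivial kernel and any purely imaginary root of $\chi$, which closes the argument.
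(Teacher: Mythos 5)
Your route is exactly the paper's: the paper offers no separate proof of Corollary~\ref{cor0}, presenting it as the spectral restatement of Proposition~\ref{thm0} obtained by setting $\tau_r=\tau_p=0$ in (\ref{pintransdelayM}) to get $\dot y=-(L+cD)y$ and invoking the standard equivalence between asymptotic stability of a linear time-invariant system and its coefficient matrix being Hurwitz. Your reduction, the exactness of the shift $y=x-s[1,\dots,1]^\top$ (zero row sums of $L$), and that equivalence are all correct. Your worry about strictness is also a non-issue: Proposition~\ref{thm0} asserts asymptotic (not merely Lyapunov) stability, and the equivalence you invoke already forces every eigenvalue of the coefficient matrix strictly into the open left half-plane, so there is nothing further to trace in \cite{CLL,LLR}.

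The one step you deferred --- the ``sign bookkeeping \dots to fix the direction of the inequality'' --- is where the actual content lies, and carrying it out does \emph{not} land on the corollary as printed. The coefficient matrix of the undelayed system is $-(L+cD)$, so the equivalence yields that all eigenvalues of $-(L+cD)$ have negative real parts, i.e., all eigenvalues of $L+cD$ have \emph{positive} real parts; equivalently, every root of $\det(\lambda I_n+L+cD)=0$ satisfies $\Re(\lambda)<0$. Indeed, under the paper's sign convention ($L_{ii}\ge 0$, $L_{ij}\le 0$ for $j\ne i$, $cd_i\ge 0$), Gershgorin's theorem places every eigenvalue of $L+cD$ in the closed right half-plane, so the literal claim ``negative real parts'' cannot hold for any Laplacian: the printed statement contains a sign slip, and what the paper actually uses later (that $\lambda=0$ cannot solve $\det(\lambda I_n+L+cD)=0$, i.e., $L+cD$ is nonsingular) is exactly what the corrected statement delivers. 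So your plan proves the intended statement; just be aware that completing the computation honestly requires reversing the inequality in the corollary, or restating it as ``all eigenvalues of $-(L+cD)$ have negative real parts.''
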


We also state an easy observation for later use:

\begin{lemma}\label{lem0}
  For any two column vectors $u,v\in\R^{n}$,
 $\det(I_{n}+uv^{\top})=1+v^{\top}u.$
\end{lemma}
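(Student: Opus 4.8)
The plan is to recognize the claim as the rank-one instance of the matrix determinant lemma and to prove it by computing a single bordered determinant in two different ways. Concretely, I would form the $(n+1)\times(n+1)$ block matrix
\[
M = \begin{pmatrix} I_n & u \\ -v^{\top} & 1 \end{pmatrix}
\]
and evaluate $\det M$ using the Schur complement with respect to each of its two diagonal blocks.

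First I would take the top-left block $I_n$ as pivot, which is trivially invertible. The Schur complement identity $\det\begin{pmatrix} A & B \\ C & D \end{pmatrix} = \det(A)\det(D - CA^{-1}B)$ then gives $\det M = \det(I_n)\,\bigl(1 - (-v^{\top})I_n^{-1}u\bigr) = 1 + v^{\top}u$, since the Schur complement here is just the scalar $1 + v^{\top}u$. Next I would take the bottom-right scalar block $1$ as pivot and use the companion formula $\det\begin{pmatrix} A & B \\ C & D \end{pmatrix} = \det(D)\det(A - BD^{-1}C)$, obtaining $\det M = 1\cdot\det\bigl(I_n - u(-v^{\top})\bigr) = \det(I_n + uv^{\top})$. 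Note that the sign $-v^{\top}$ chosen in the lower-left corner is precisely what makes both expansions come out with a plus sign. Equating the two evaluations of $\det M$ yields the claim.

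The only point requiring care is the simultaneous validity of the two Schur-complement factorizations, but this is unproblematic here because both pivot blocks, $I_n$ and the scalar $1$, are manifestly invertible regardless of $u$ and $v$; in particular no nondegeneracy hypothesis on $u$, $v$, or $v^{\top}u$ is needed, so the identity holds for all vectors. I therefore anticipate no genuine obstacle, and the work reduces to writing the two factorizations cleanly. Should a fully self-contained argument be preferred over invoking the block-determinant formula, an alternative is to argue via eigenvalues: the rank-one matrix $uv^{\top}$ has characteristic polynomial $\lambda^{n-1}(\lambda - v^{\top}u)$, because $u$ is an eigenvector with eigenvalue $v^{\top}u$ while the kernel of $v^{\top}$ supplies the eigenvalue $0$ with multiplicity $n-1$; since $\det(I_n + uv^{\top})$ equals $\prod_i (1+\lambda_i)$ over the eigenvalues $\lambda_i$ of $uv^{\top}$, this product is $(1 + v^{\top}u)\cdot 1^{\,n-1} = 1 + v^{\top}u$.
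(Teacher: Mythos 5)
Your proof is correct, but note that the paper does not actually prove Lemma~\ref{lem0}: it is stated there as an ``easy observation'' (it is the rank-one case of the matrix determinant lemma) and is used without proof, so there is no argument in the paper to compare yours against. Your main argument---evaluating $\det\left(\begin{smallmatrix} I_n & u \\ -v^{\top} & 1 \end{smallmatrix}\right)$ via the Schur complement of each diagonal block in turn---is complete and rigorous; as you observe, both pivot blocks are invertible unconditionally, so the identity holds for all $u,v$ with no nondegeneracy hypothesis. The only imperfection is in your fallback eigenvalue argument: when $v^{\top}u=0$ but $uv^{\top}\neq 0$, the matrix $uv^{\top}$ is a nonzero nilpotent, and your eigenvector count (the single eigenvector $u$ together with the $(n-1)$-dimensional kernel of $v^{\top}$) does not produce $n$ independent eigenvectors, so the claimed characteristic polynomial $\lambda^{n-1}(\lambda-v^{\top}u)$ needs a separate, though easy, justification in that degenerate case---for instance from $(uv^{\top})^{2}=(v^{\top}u)\,uv^{\top}=0$, which forces all eigenvalues to vanish, or by a continuity/density argument. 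Since that remark is only an alternative and your primary Schur-complement proof stands on its own, the lemma is fully established.
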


\section{Estimation of the largest admissible pinning delay}


We first show that the system (\ref{pintransdelayM}) is stable for all values of the pinning delay $\tau_p$ smaller than a certain value $\tau_p^*$.

\begin{proposition}\label{thm1}
Assume condition {(H)}. Let
\begin{equation} \label{F}
F(w,c,l,\tau)= c^{2}+\omega^{2}+2c\left[l\cos(\omega\tau)
-\omega\sin(\omega\tau)\right]
\end{equation}
 and define
\begin{equation}
\tau_{p}^{*}=\sup_{\tau>0}\left\{\tau:~\min_{\omega\in\R}\min_{i\in\mathcal D}F(w,c,L_{ii},\tau)>0\right\}.  \label{taup}
\end{equation}
If $\tau_{p}<\tau_{p}^{*}$, then system (\ref{pintransdelayM}) is stable for all $\tau_{r}\ge 0$.
\end{proposition}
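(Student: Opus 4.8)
The plan is to show that no characteristic root can lie in the closed right half‑plane, by a continuity (homotopy) argument in the two delay parameters combined with a diagonal‑dominance analysis of the characteristic matrix on the imaginary axis. Write $M(\lambda)=\lambda I_{n}+K-A\exp(-\lambda\tau_{r})+cD\exp(-\lambda\tau_{p})$, so that $\chi(\lambda)=\det M(\lambda)$. Since (\ref{pintransdelayM}) is of retarded type, for any $\gamma\in\R$ only finitely many roots satisfy $\Re(\lambda)\ge\gamma$, these roots vary continuously as $(\tau_{r},\tau_{p})$ vary, and $M(\lambda)\to\lambda I_{n}+K$ as $\Re(\lambda)\to+\infty$, so no root can escape to infinity within the right half‑plane. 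Consequently the number of roots with $\Re(\lambda)\ge0$ can change only when a root crosses the imaginary axis. At $(\tau_{r},\tau_{p})=(0,0)$ the system is stable by Corollary~\ref{cor0}, so it suffices to exhibit a path in the $(\tau_{r},\tau_{p})$‑plane from the origin to the target pair along which $M(\mathbf{j}\omega)$ is nonsingular for every $\omega\in\R$.

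First I would treat the purely imaginary values directly. At $\lambda=0$ the exponentials equal $1$, giving $M(0)=K-A+cD=L+cD$, which is nonsingular by Corollary~\ref{cor0}; note this is independent of the delays. For $\omega\ne0$ I would show that $M(\mathbf{j}\omega)$ is strictly diagonally dominant and invoke the Levy--Desplanques theorem. The key observation is that $|\exp(-\mathbf{j}\omega\tau_{r})|=1$, so the off‑diagonal row sums are $\sum_{j\ne i}|M_{ij}|=\sum_{j\ne i}A_{ij}=L_{ii}$, independently of $\tau_{r}$ and $\omega$. The $i$‑th diagonal entry is $\mathbf{j}\omega+L_{ii}+cd_{i}\exp(-\mathbf{j}\omega\tau_{p})$. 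For $i\notin\mathcal D$ we get $|M_{ii}|^{2}=L_{ii}^{2}+\omega^{2}>L_{ii}^{2}$ whenever $\omega\ne0$, while for $i\in\mathcal D$ a direct expansion gives exactly $|M_{ii}|^{2}-L_{ii}^{2}=F(\omega,c,L_{ii},\tau_{p})$. Hence strict dominance of every row holds iff $\omega\ne0$ and $F(\omega,c,L_{ii},\tau_{p})>0$ for all $i\in\mathcal D$, and crucially this condition involves $\tau_{r}$ not at all.

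With pointwise nonsingularity in hand, I would assemble the homotopy in two stages: first increase $\tau_{r}$ from $0$ to its target value while holding $\tau_{p}=0$, then increase $\tau_{p}$ from $0$ to its target value while holding $\tau_{r}$ fixed. Along the first stage $F(\omega,c,L_{ii},0)=c^{2}+\omega^{2}+2cL_{ii}>0$ for every $\omega$, so no imaginary root appears for any $\tau_{r}$. Along the second stage, nonsingularity on the imaginary axis requires $\min_{\omega}\min_{i\in\mathcal D}F(\omega,c,L_{ii},\tau)>0$ for every intermediate $\tau\in[0,\tau_{p}]$, which is precisely membership in the set defining $\tau_{p}^{*}$. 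Since the count of roots with $\Re(\lambda)\ge0$ is $0$ at the origin and cannot change along the path, the system is stable at $(\tau_{r},\tau_{p})$ for the given $\tau_{p}<\tau_{p}^{*}$ and arbitrary $\tau_{r}\ge0$.

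The main obstacle, I expect, is the interface between the sup‑definition of $\tau_{p}^{*}$ and the homotopy: the continuity argument needs $F>0$ for all $\tau$ on the whole segment $[0,\tau_{p}]$, whereas $\tau_{p}^{*}=\sup\{\tau:\cdots\}$ only asserts positivity at admissible $\tau$. One must therefore verify that the admissible set is an interval containing $0$ --- equivalently, that $\min_{\omega,i}F(\omega,c,L_{ii},\cdot)$, which equals $c^{2}>0$ at $\tau=0$ and is continuous, stays positive throughout $(0,\tau_{p}^{*})$ --- so that $\tau_{p}^{*}$ is genuinely the first delay at which dominance is lost. The remaining ingredients (continuity of the spectrum and the absence of roots at infinity for retarded equations) are standard, and the diagonal‑dominance computation identifying $|M_{ii}|^{2}-L_{ii}^{2}$ with $F$ is routine algebra.
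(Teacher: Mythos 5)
Your proposal is correct and follows essentially the same route as the paper: your Levy--Desplanques/diagonal-dominance computation on the imaginary axis (identifying $|M_{ii}|^{2}-L_{ii}^{2}$ with $F(\omega,c,L_{ii},\tau_{p})$) is exactly the paper's Gershgorin-disc step, and your homotopy/root-counting argument is what the paper invokes via the cited Ruan--Wei theorem, with the paper handling the $\tau_{p}=0$, arbitrary $\tau_{r}$ case by a direct Gershgorin contradiction rather than your first homotopy stage. The one obstacle you flag --- that the set $\{\tau:\min_{\omega}\min_{i\in\mathcal D}F(\omega,c,L_{ii},\tau)>0\}$ must contain all of $(0,\tau_{p}^{*})$ for the continuation to work --- is passed over silently in the paper as well, which simply asserts $F>0$ for all $\tau_{p}<\tau_{p}^{*}$, so your version is if anything the more careful one (modulo the trivial slip that $\min_{\omega}F$ at $\tau=0$ is $c^{2}+2cL_{ii}$, not $c^{2}$).
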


\begin{proof}
First, we take $\tau_{p}=0$ and prove stability for all $\tau_r\ge0$. Assume for contradiction  that there exists some characteristic root $\lambda^{*}$ of (\ref{chpintransdelay}) such that $\Re(\lambda^*) \ge 0$. Applying the Gershgorin disc theorem to \eqref{chpintransdelay}, we have
\begin{equation}
|\lambda^{*}+L_{ii}+cd_{i}|\le\sum_{j\ne i}|L_{ij}||\exp(-\lambda^{*}\tau_{r})|\le \sum_{j\ne i}|L_{ij}|=L_{ii} \label{disc0}
\end{equation}
for some $i$, which implies
\begin{equation*}
[\Re(\lambda^{*})+L_{ii}+cd_{i}]^{2}+[{\Im}(\lambda^{*})]^{2}\le L_{ii}^{2}. \label{disc1}
\end{equation*}
Since $L_{ii},c,d_{i}\ge 0$, it must be the case that $\Re(\lambda^{*})=\Im(\lambda^*)=0$; i.e., $\lambda^* = 0$.
%
Then  $\exp(-\tau_{r}\lambda^{*})=1$, and since $\tau_p=0$,  \eqref{chpintransdelay} gives $\det(\lambda^* I_{n}+L+cD)=0$.
This, however, contradicts Corollary~\ref{cor0}. Therefore, when $\tau_{p}=0$, all characteristic roots of (\ref{chpintransdelay}) have negative real parts.

We now let $\tau_p \ge 0$. Suppose (\ref{chpintransdelay}) has a purely imaginary  root $\lambda=\jj\omega$, $\omega\in\mathbb{R}$. By (\ref{disc0}), we have, for some index $q$,
\begin{eqnarray*}
&&|\jj\omega+L_{qq}+cd_{q}\exp(-\jj\omega\tau_{p})|\le\sum_{j\ne q}|L_{qj}||\exp(-\jj\omega\tau_{r})\\
&&=\sum_{j\ne q}|L_{qj}|=L_{qq}
\end{eqnarray*}
implying
\begin{equation*}
\sqrt{[L_{qq}+cd_{q}\cos(\omega\tau_{p})]^{2}+[\omega-cd_{q}
\sin(\omega\tau_{p})]^{2}}\le L_{qq}.
\end{equation*}
Thus,
\begin{equation}
(cd_{q})^{2}+\omega^{2}+2cd_{q}\left(L_{qq}
\cos(\omega\tau_{p})-\omega\sin(\omega\tau_{p})\right)\le 0.
\label{ineq1}
\end{equation}
We claim that $q$ must be a pinned node. For if $d_{q}=0$, then $\omega$ must be zero, which implies that zero is a characteristic root of (\ref{chpintransdelay}), contradicting Corollary~\ref{cor0}. Therefore $d_{q}=1$.
In the notation of \eqref{F}, the inequality \eqref{ineq1} can then be written as $F(w,c,L_{qq},\tau_{p})\le 0$.
By (\ref{taup}), however, we have that $F(w,c,L_{qq},\tau_{p})>0$ for all $p\in\mathcal D$, $\omega\in\R$ and $\tau_{p}<\tau_{p}^{*}$.
We conclude that (\ref{chpintransdelay}) does not have purely imaginary roots for $\tau_{p}<\tau_{p}^{*}$.
Thus, by  \cite[Theorem 2.1]{RW},  all characteristic roots of (\ref{chpintransdelay}) have  strictly negative real parts for $\tau_{p}<\tau_{p}^{*}$.
\end{proof}
\begin{remark}
Proposition \ref{thm1} provides an estimate for the largest admissible pinning delay for which system (\ref{pintransdelayM}) is stable.
This estimate needs only the knowledge of the set of pinned nodes and their weighted in-degrees.
\end{remark}
\ignore{
and can be numerically computed in two steps.
First, for given $c,l,\tau$, we seek the global minimum value of $F(\omega,c,l,\tau)$ in \eqref{F} with respect to $\omega\in\R$ and denote it $G(c,l,\tau)=\min_{\omega\in\R}F(\omega,c,l,\tau)$.
Second, we solve $G(c,l,\tau)=0$ with respect to $\tau$ for the smallest positive solution, and denote it $H(c,l)$. Then, $\tau_{p}^{*}=\min_{i\in\mathcal D}H(c,L_{ii})$. In this case, the estimate is independent of $\tau_{r}$. Figure~\ref{Fig1} shows that $H(c,l)$ decreases with respect to both $c$ and $l$. That is, the larger the pinning strength and the degree of the pinned node, the smaller is $\tau_{p}^{*}$. Equivalently, a larger $\tau_{p}^{*}$ requires a smaller pinning strength or pinning lower-degree nodes to ensure stability. From the proof, one can see that Proposition 1 still holds for the case of heterogeneous transmission delays and a single pinning delay.

\begin{figure}[!t]
\centering
\subfigure[$H(c,l)$ versus $c$ for $l=5$]
{
\includegraphics[height=0.2\textwidth,width=.4\textwidth]
{tau_c_l5.eps}
}
\subfigure[$H(c,l)$ versus degree for $c=1$]{
\includegraphics[height=0.2\textwidth,width=.4\textwidth]
{tau_l_c1.eps}
}
\caption{Estimation of the largest admissible pinning delay as a function $H(c,l)$ of the pinning strength $c$ and the degree $l$.
}\label{Fig1}
\end{figure}}

\section{Pinning a single node}
We now consider the possibility of controlling the network using a single node, say, the $q${th} one. Then $D=u_{q}u_{q}^{\top}$, where $u_{q}$ denotes the $q$th standard basis vector,
whose $q$th component is one and other components zero.
If $\lambda I_{n}+K-A\exp(-\lambda\tau_{r})$ is nonsingular, the characteristic equation (\ref{chpintransdelay}) becomes
\begin{align}
\chi(\lambda) = & \det\left[\lambda I_{n}+K-A\exp(-\lambda\tau_{r})+cu_{q}u_{q}^{\top}\exp(-\lambda\tau_{p})\right] \nonumber\\
=&\det(\lambda I_{n}+K-A\exp(-\lambda\tau_{r}))\nonumber\\
&\det\left[I_{n}+cu_{q}u_{q}^{\top}(\lambda I_{n}+K-A\exp(-\lambda\tau_{r}))^{-1}\exp(-\lambda\tau_{p})\right]\nonumber\\
=&\det(\lambda I_{n}+K-A\exp(-\lambda\tau_{r}))\nonumber\\
&(1+c u_{q}^{\top}(\lambda I_{n}+K-A\exp(-\lambda\tau_{r}))^{-1}u_{q}\exp(-\lambda\tau_{p}))\label{eqnxx1}
\end{align}
using Lemma \ref{lem0}. Then we have the following result.

\begin{proposition}\label{prop3}
Assume (H). If all solutions $\lambda$ of the equation
\begin{equation}  \label{eqnx1}
1+c u_{q}^{\top}(\lambda I_{n}+K-A\exp(-\lambda\tau_{r}))^{-1}u_{q}\exp(-\lambda\tau_{p})=0
\end{equation}
satisfy $\Re(\lambda)<0$, then system (\ref{pintransdelayM}) is stable.
\end{proposition}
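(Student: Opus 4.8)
The plan is to show that $\chi(\lambda)$ has no zero in the closed right half-plane $\{\Re(\lambda)\ge 0\}$; since asymptotic stability of \eqref{pintransdelayM} is equivalent to all roots of \eqref{chpintransdelay} having negative real part, this suffices. I would fix an arbitrary $\lambda$ with $\Re(\lambda)\ge 0$ and prove $\chi(\lambda)\ne 0$, distinguishing two cases according to whether the unpinned factor $\chi_{0}(\lambda):=\det\big(\lambda I_{n}+K-A\exp(-\lambda\tau_{r})\big)$ vanishes.

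In the case $\chi_{0}(\lambda)\ne 0$, the factorization \eqref{eqnxx1} supplied by Lemma~\ref{lem0} is legitimate, and it expresses $\chi(\lambda)$ as the product of $\chi_{0}(\lambda)$ with the left-hand side of \eqref{eqnx1}. Because $\Re(\lambda)\ge 0$, the hypothesis forbids $\lambda$ from being a root of \eqref{eqnx1}, so that factor is nonzero; together with $\chi_{0}(\lambda)\ne 0$ this yields $\chi(\lambda)\ne 0$.

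The opposite case, $\chi_{0}(\lambda)=0$ with $\Re(\lambda)\ge 0$, is the crux, because there the factorization degenerates and \eqref{eqnx1} carries no information. Here I would repeat the Gershgorin argument of Proposition~\ref{thm1} with the pinning term deleted: singularity of $\lambda I_{n}+K-A\exp(-\lambda\tau_{r})$ places $0$ in some Gershgorin disc whose center is $\lambda+L_{ii}$ (note $A_{ii}=L_{ii}-L_{ii}=0$) and whose radius is $|\exp(-\lambda\tau_{r})|\sum_{j\ne i}|A_{ij}|=\exp(-\Re(\lambda)\tau_{r})\,L_{ii}\le L_{ii}$. Thus $|\lambda+L_{ii}|\le L_{ii}$, and exactly as in the lines following \eqref{disc0} this forces $\lambda=0$. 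It then remains only to check that $\lambda=0$ is not a root of the full $\chi$: substituting $\lambda=0$ into \eqref{chpintransdelay} gives $\chi(0)=\det(K-A+cD)=\det(L+cD)$, which is nonzero by Corollary~\ref{cor0}, since a matrix all of whose eigenvalues have negative real part is invertible. Hence $\chi(0)\ne 0$.

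The two cases together give $\chi(\lambda)\ne 0$ on all of $\{\Re(\lambda)\ge 0\}$, so every characteristic root lies strictly in the left half-plane and \eqref{pintransdelayM} is stable. The step I expect to be the true obstacle is the second case: one cannot simply transfer the hypothesis on \eqref{eqnx1} to $\chi$, because \eqref{eqnx1} was derived assuming $\lambda I_{n}+K-A\exp(-\lambda\tau_{r})$ invertible and therefore says nothing about the roots at which $\chi_{0}$ vanishes; recovering those roots independently via the Gershgorin estimate and Corollary~\ref{cor0} is what closes the argument.
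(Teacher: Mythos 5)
Your proof is correct, and it is in fact tighter than the paper's own. The published proof is two sentences: it claims that, ``as in the first part of the proof of Proposition~\ref{thm1},'' the equation $\det[\lambda I_{n}+K-A\exp(-\lambda\tau_{r})]=0$ has no solutions with $\Re(\lambda)\ge 0$, and concludes from the factorization \eqref{eqnxx1}. Read literally, that claim is false: at $\lambda=0$ the matrix reduces to $K-A=L$, and the Laplacian is always singular, so $\lambda=0$ is a root of the unpinned factor $\chi_{0}$ lying in the closed right half-plane. The Gershgorin computation only forces any closed-right-half-plane root of $\chi_{0}$ to equal $0$; in Proposition~\ref{thm1} the contradiction at that point comes from $\det(L+cD)\neq 0$, i.e.\ from the pinning term, which is exactly what $\chi_{0}$ lacks. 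What makes Proposition~\ref{prop3} true nonetheless is the cancellation you exploit in your second case: the zero of $\chi_{0}$ at the origin is compensated by the pole of the rational factor in \eqref{eqnxx1}, and the full characteristic function satisfies $\chi(0)=\det(L+cD)\neq 0$ by Corollary~\ref{cor0}, so $\lambda=0$ is not a root of \eqref{chpintransdelay}. Thus the two arguments use the same ingredients (the factorization via Lemma~\ref{lem0}, Gershgorin, Corollary~\ref{cor0}), but your case distinction --- treat roots of \eqref{eqnx1} where $\chi_{0}\neq 0$, and handle the degenerate point $\lambda=0$ separately through $\chi(0)$ --- is precisely the step needed to turn the paper's sketch into a complete proof. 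The ``obstacle'' you flagged at the end is real, and it is the very point the published proof glosses over (indeed misstates).
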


\begin{proof}
As in the first part of the proof of Proposition~\ref{thm1}, the equation
$\det[\lambda I_{n}+K-A\exp(-\lambda\tau_{r})]=0$
has no solutions with $\Re(\lambda)\ge 0$. Hence, if all solutions $\lambda$ of (\ref{eqnx1}) have negative real parts, then all  roots of (\ref{chpintransdelay}) have negative real parts.
\end{proof}

We consider two specific cases to obtain more information about the solutions of \eqref{eqnx1}. First, we consider the absence of transmission delays, i.e., $\tau_{r}=0$. Suppose for simplicity that $L$ is diagonalizable and has only real eigenvalues:  $L=Q^{-1}JQ$ for some nonsingular  $Q$ and a real diagonal matrix
$J=\diag\{\theta_{1},\dots,\theta_{n}\}$ of eigenvalues of $L$. The column vectors of $Q^{-1}$ (resp,  the row vectors of $Q$) are the right (resp., left) eigenvectors of $L$.
Then, (\ref{eqnx1}) can be written as
\begin{equation} \label{eqnx11}
1+c\zeta^{\top}(\lambda I_{n}+J)^{-1}\xi\exp(-\lambda\tau_{p})=0,
\end{equation}
where $\zeta^{\top}=u_{q}^{\top}Q$ is the $q$th left eigenvector and $\xi=Q^{-1}u_{q}$ is the $q$th right eigenvector of $L$. We expand \eqref{eqnx11}  as
\begin{equation}
1+c\sum_{i=1}^{n}\frac{\xi_{i}\zeta_{i}\exp(-\lambda\tau_{p})}
{\lambda+\theta_{i}}=0  \label{chsing}
\end{equation}in terms of the components $\xi_i,\zeta_i$ of $\xi$ and $\zeta$, respectively.
Consider the smallest value of $\tau_{p}$ for which there exists a purely imaginary solution, $\lambda=\jj\omega$. Then, the real and imaginary parts of (\ref{chsing}) give
\begin{equation*}
\begin{cases}
1+a(\omega)\cos(\omega\tau_{p})-b(\omega)\sin(\omega\tau_{p})&=0\\
b(\omega)\cos(\omega\tau_{p})+a(\omega)\sin(\omega\tau_{p})&=0
\end{cases}
\end{equation*}
where
\begin{equation}
a(\omega)=c\sum_{i=1}^{n}\frac{\xi_{i}\zeta_{i}\theta_{i}}{\omega^{2}+\theta_{i}^{2}},\quad
b(\omega)=c\sum_{i}\frac{\xi_{i}\zeta_{i}\omega}{\omega^{2}+\theta_{i}^{2}}.  \label{AB}
\end{equation}
Rearranging gives $\cos(\omega\tau_{p})=-a(\omega)/(a^{2}(\omega)+b^{2}(\omega))$ and $\sin(\omega\tau_{p})=b(\omega)/(a^{2}(\omega)+b^{2}(\omega))$. This implies $a(\omega)^{2}+b^{2}(\omega)=1$ and
\begin{equation}
\cos(\omega\tau_{p})=-a(\omega),\quad \sin(\omega\tau_{p})=b(\omega).\label{TT}
\end{equation}
We then have the following result.

\begin{proposition}  \label{single_pin_thm1}
Suppose $\tau_{r}=0$, $L$ is diagonalizable, irreducible, {and all its eigenvalues are real}. Let the eigenvalues $\{\theta_i\}$ of $L$ be sorted so that $\theta_{q}=0$,
and let $\zeta=[\zeta_{1},\dots,\zeta_{n}]$, $\sum_{k=1}^{n}\zeta_{k}=1$, be the left eigenvector of $L$ corresponding to the zero eigenvalue. Let $\mathcal Z$ denote the set of  positive solutions of the equation
\begin{equation}
	a^{2}(\omega)+b^{2}(\omega)=1\label{eqnx2}
\end{equation}
with respect to the variable $\omega^{2}$,
where $a(\omega)$ and $b(\omega)$ are given by (\ref{AB}).
Define
\begin{equation}
\tau_{p}^{M}=\frac{\arccos(-a(\sqrt{\max\mathcal Z}))}{\sqrt{\max\mathcal Z}}.\label{tauPP}
\end{equation}
Then system (\ref{pintransdelayM}) is stable for $\tau_{p}<\tau^{M}_{p}$.
\end{proposition}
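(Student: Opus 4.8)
The plan is to combine the single-node factorization already established with a homotopy (continuation) argument in the pinning delay $\tau_p$. Setting $\tau_r=0$ in \eqref{eqnx1} we have $K-A=L$, so the governing equation is \eqref{chsing}, and by Proposition~\ref{prop3} it suffices to show that every root $\lambda$ of \eqref{chsing} satisfies $\Re(\lambda)<0$ whenever $\tau_p<\tau_p^M$. It is convenient to introduce the scalar transfer function $h(\lambda)=c\,u_q^{\top}(\lambda I_n+L)^{-1}u_q=c\sum_{i=1}^n \xi_i\zeta_i/(\lambda+\theta_i)$, so that \eqref{chsing} reads $1+h(\lambda)\exp(-\lambda\tau_p)=0$ and, along the imaginary axis, $h(\jj\omega)=a(\omega)-\jj b(\omega)$ with $a,b$ as in \eqref{AB}.

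First I would fix the base point $\tau_p=0$. There \eqref{chsing} is equivalent, via Lemma~\ref{lem0}, to $\det(\lambda I_n+L+cu_qu_q^{\top})=0$, and Corollary~\ref{cor0} (equivalently Proposition~\ref{thm0}, and the first part of the proof of Proposition~\ref{thm1}) guarantees that all its roots lie in the open left half-plane. Next I would invoke the standard fact for retarded equations that the roots of \eqref{chsing} move continuously with $\tau_p$ and that stability can be lost only when a root reaches the imaginary axis; as in Proposition~\ref{thm1} this is the content of \cite[Theorem~2.1]{RW}. Consequently the stability threshold is the smallest $\tau_p>0$ for which \eqref{chsing} admits a purely imaginary root $\lambda=\jj\omega$, and the calculation preceding the statement shows that such a root forces the modulus condition $a^2(\omega)+b^2(\omega)=1$ (that is $|h(\jj\omega)|=1$, equation \eqref{eqnx2}) together with the phase relations \eqref{TT}.

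It then remains to locate this smallest crossing delay and to identify it with \eqref{tauPP}. I would first record the two limiting regimes of $|h(\jj\omega)|$. Because $L$ is irreducible, $\theta_q=0$ is a simple eigenvalue and the residue of $h$ at this pole equals $c\zeta_q\neq0$, so $h(\jj\omega)\sim c\zeta_q/(\jj\omega)$ and $|h(\jj\omega)|\to\infty$ as $\omega\to0^+$; and since every $\theta_i$ is fixed, $|h(\jj\omega)|\to0$ as $\omega\to\infty$. Hence $\mathcal Z$ is nonempty and bounded, $\max\mathcal Z$ exists, and for every $\omega>\sqrt{\max\mathcal Z}$ one has $|h(\jj\omega)|<1$ (otherwise continuity and the decay at infinity would force another root of \eqref{eqnx2} beyond $\max\mathcal Z$), so no imaginary root can occur at such frequencies. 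For each admissible frequency $\omega=\sqrt{z}$, $z\in\mathcal Z$, the relations \eqref{TT} fix $\omega\tau_p$ modulo $2\pi$ and hence a discrete set of candidate delays; at $\omega=\sqrt{\max\mathcal Z}$ the smallest positive one is $\arccos(-a(\omega))/\omega=\tau_p^M$. Showing that no imaginary root exists for $\tau_p<\tau_p^M$, and combining this with stability at $\tau_p=0$ and continuity of roots, then yields the claim.

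The main obstacle is the last step: justifying that $\max\mathcal Z$, read with the principal branch $\arccos(\cdot)\in[0,\pi]$, really produces the smallest crossing delay. Two points need care. First, the principal branch is the correct one only if $b(\sqrt{\max\mathcal Z})\ge0$, so that the phase $\omega\tau_p$ lies in $[0,\pi]$; this must be extracted from the sign structure of $a,b$ near the largest root of \eqref{eqnx2}. Second, and more importantly, one must rule out an earlier crossing at a smaller frequency, i.e.\ show $\arccos(-a(\sqrt{\max\mathcal Z}))/\sqrt{\max\mathcal Z}\le\tau(\omega')$ for every other crossing frequency $\omega'<\sqrt{\max\mathcal Z}$. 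This is a comparison/monotonicity statement for the candidate-delay map $\omega\mapsto\arccos(-a(\omega))/\omega$ along $\sqrt{\mathcal Z}$; it is intuitively plausible because the larger frequency enters in the denominator while the numerator stays bounded in $[0,\pi]$, but a rigorous proof appears to require either monotonicity of $|h(\jj\omega)|$ (which would make $\mathcal Z$ a single point) or additional sign information on the products $\xi_i\zeta_i$. I would expect the bulk of the effort to be concentrated here; the reduction, the base-case stability, and the continuity argument are routine given the earlier results.
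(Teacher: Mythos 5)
Your overall strategy coincides with the paper's: reduce to \eqref{chsing} via Proposition~\ref{prop3}, get stability at $\tau_{p}=0$ from Corollary~\ref{cor0}, invoke \cite[Theorem 2.1]{RW} so that instability can only arise through a purely imaginary root, and characterize such roots by \eqref{eqnx2} and \eqref{TT}. However, the step you explicitly leave open --- validating the principal branch of $\arccos$ and ruling out an earlier crossing at a frequency other than $\sqrt{\max\mathcal Z}$ --- is exactly where the substance of the proposition lies, so what you have is a genuine gap rather than a complete proof by another route.

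The paper closes this gap with precisely the ``additional sign information on the products $\xi_{i}\zeta_{i}$'' that you conjectured would be needed, and it extracts it from the irreducibility hypothesis: the proof asserts that $\theta_{i}>0$ for all $i\ne q$ and that $\xi_{i},\zeta_{i}>0$ for all $i$ (for the zero eigenvalue this is Perron--Frobenius: the right eigenvector is the all-ones vector and the left eigenvector is entrywise positive). Feeding this into \eqref{AB} gives two facts. First, $a(\omega)>0$ for every $\omega$, and $b(\omega)$ has the same sign as $\omega$; hence at $\omega=\sqrt z$ one has $\sin(\omega\tau_{p})=b(\omega)\ge 0$, which is exactly your condition legitimizing the principal branch, while at $\omega=-\sqrt z$ (where $a>0$ and $b\le 0$) the smallest positive solution of \eqref{TT} is again $\arccos(-a(\sqrt z))/\sqrt z$, by the evenness of $a$ and oddness of $b$. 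Second, every term of $a$ is positive and decreasing in $\omega^{2}$, so $-a(\sqrt z)$ is increasing in $z$, and therefore $z\mapsto\arccos(-a(\sqrt z))/\sqrt z$ is a product of two positive decreasing functions, hence decreasing on $z>0$. This monotonicity --- of the candidate-delay map, not of $|h(\jj\omega)|$ --- is what rules out an earlier crossing at a smaller $z\in\mathcal Z$: candidate delays decrease as $z$ increases, so their minimum over $\mathcal Z$ is attained at $\max\mathcal Z$, which yields \eqref{tauPP}. Neither fact requires $\mathcal Z$ to be a singleton, which is where your two proposed escape routes stalled. (Your observations that $\mathcal Z$ is nonempty and bounded are correct but not needed by the paper; if $\mathcal Z$ were empty there would simply be no crossings at all.)
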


\begin{proof}
Eq.~(\ref{eqnxx1}) implies that any purely imaginary solution  $\jj\omega$ of  (\ref{chpintransdelay}) should also be a solution of (\ref{chsing}).
Then $\omega$ must be a real solution of (\ref{eqnx2}).
By the definition of $\mathcal Z$, the solution set of (\ref{eqnx2})
with respect to $\omega$ is $\{\pm\sqrt{z}:~z\in\mathcal Z\}$.
%
{By the assumption of irreducibility, $\theta_{i}>0$ for all $i\ne q$ and $\zeta_{i},\xi_{i}>0$ $\forall i$.}
If $\omega=\sqrt{z}$, then the smallest positive solution of (\ref{TT}) with respect to $\tau_{p}$ is $\arccos(-a(\sqrt{z}))/\sqrt{z}$.
If, on the other hand,
$\omega=-\sqrt{z}$, noting that $a(\omega)>0$ and $b(\omega)\le 0$, the smallest positive solution of (\ref{TT})
is again $\arccos(-a(\sqrt{z}))/\sqrt{z}$.
Therefore, given $\omega^{2}\in\mathcal Z$, the smallest nonnegative solution of (\ref{TT}) with respect to $\tau_{p}$ should be in the set $\{\arccos(-a(\sqrt{z}))/\sqrt{z}:z\in\mathcal Z\}$.
Since the mapping $z\mapsto \arccos(-a(\sqrt{z}))/\sqrt{z}$ is a decreasing function of $z>0$, the quantity $\tau_{p}^{M}$ defined in (\ref{tauPP}) is the smallest nonnegative solution of (\ref{TT}) with respect to $\tau_{p}$, given $\omega^{2}\in\mathcal Z$.
Hence, for $\tau_{p}<\tau_{p}^{M}$ (\ref{chsing}) does not have any purely imaginary solutions. Since for $\tau_{p}=0$ all characteristic roots of (\ref{chpintransdelay}) have negative real parts, we conclude
that all roots have negative real parts for $\tau_{p}<\tau_{p}^{M}$.
\end{proof}

\begin{remark}
By derivation, Eq.~(\ref{chsing}) is independent of the ordering of the eigenvalues or the eigenvectors in $J$.
Therefore, the bound $\tau_{p}^{M}$ for allowable pinning delays given in Proposition~\ref{single_pin_thm1} does not depend on the choice of the pinned node.
\end{remark}

Proposition \ref{single_pin_thm1} suggests
an 
algorithm to calculate $\tau_{p}^{M}$:
\begin{enumerate}
\item Find the largest positive solution $\omega^2$ of the equation
  \begin{equation}
    \sum_{k=1}^{n}\frac{(\xi_{k}\zeta_{k})^{2}}{\omega^{2}+\theta_{k}^{2}}+2\sum_{i>j}
    \frac{\xi_{i}\xi_{j}\zeta_{i}\zeta_{j}(\theta_{i}\theta_{j}+\omega^{2})}
    {(\omega^{2}+\theta_{i}^{2})(\omega^{2}+\theta_{j}^{2})}=\frac{1}{c^{2}}. \label{eqnx3}
  \end{equation}
\item Calculate (\ref{tauPP}).
\end{enumerate}

We illustrate this approach in an Erd\H{o}s-Renyi (E-R) random network of $n=100$ nodes with linking probability $0.03$, where the first node is pinned.
The left and right eigenvectors of $L$ associated with the zero eigenvalue are given by $\zeta=[1,\dots,1]/\sqrt{n}$.
Figure \ref{single_region} shows the parameter region $\{(c,\tau_{p}):\tau_{p}<\tau_{p}^{M}\}$,
illustrating the inverse dependence of $\tau_p^M$ on $c$. Note that  $\tau_{p}>\tau_{p}^{M}$ does not necessarily imply instability, since  Proposition \ref{single_pin_thm1} gives only a sufficient condition. Nevertheless, the curve shown in Fig.~\ref{single_region}(a) turns out to be a good approximation of the boundary of the  exact stability region.
To illustrate, we take two parameter points very close ($\pm 10\%$ of the $\tau_{p}^{M}$) to the curve but on different sides of it, as indicated by blue and red stars in Fig.~\ref{single_region}(a).  We simulate (\ref{pintransdelay}) at the corresponding parameter values, with the same Laplacian as above and $\tau_{r}=0$. As seen in Fig.~\ref{single_region}(b)--(c), the two points indeed yield different stability properties.

\begin{figure}[!t]
\centering
\subfigure[The stability region $\{(c,\tau_{p}):\tau_{p}<\tau_{p}^{M}\}$.]{
\includegraphics[height=0.15\textwidth, width=0.4\textwidth]{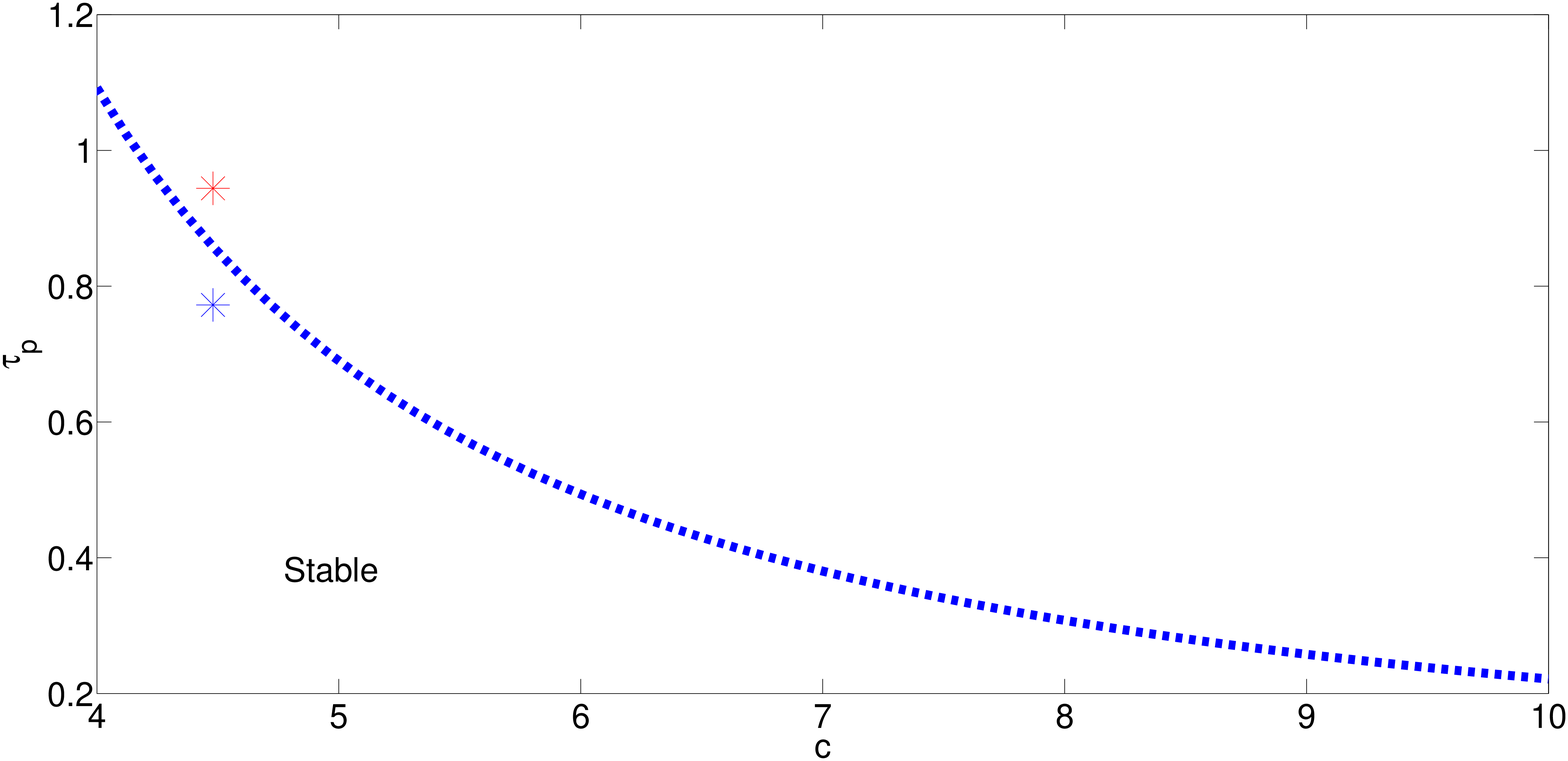}}
\subfigure[$c=4.48$ and $\tau_{p}=0.7724$]
{
\includegraphics[height=0.15\textwidth, width=.4\textwidth]
{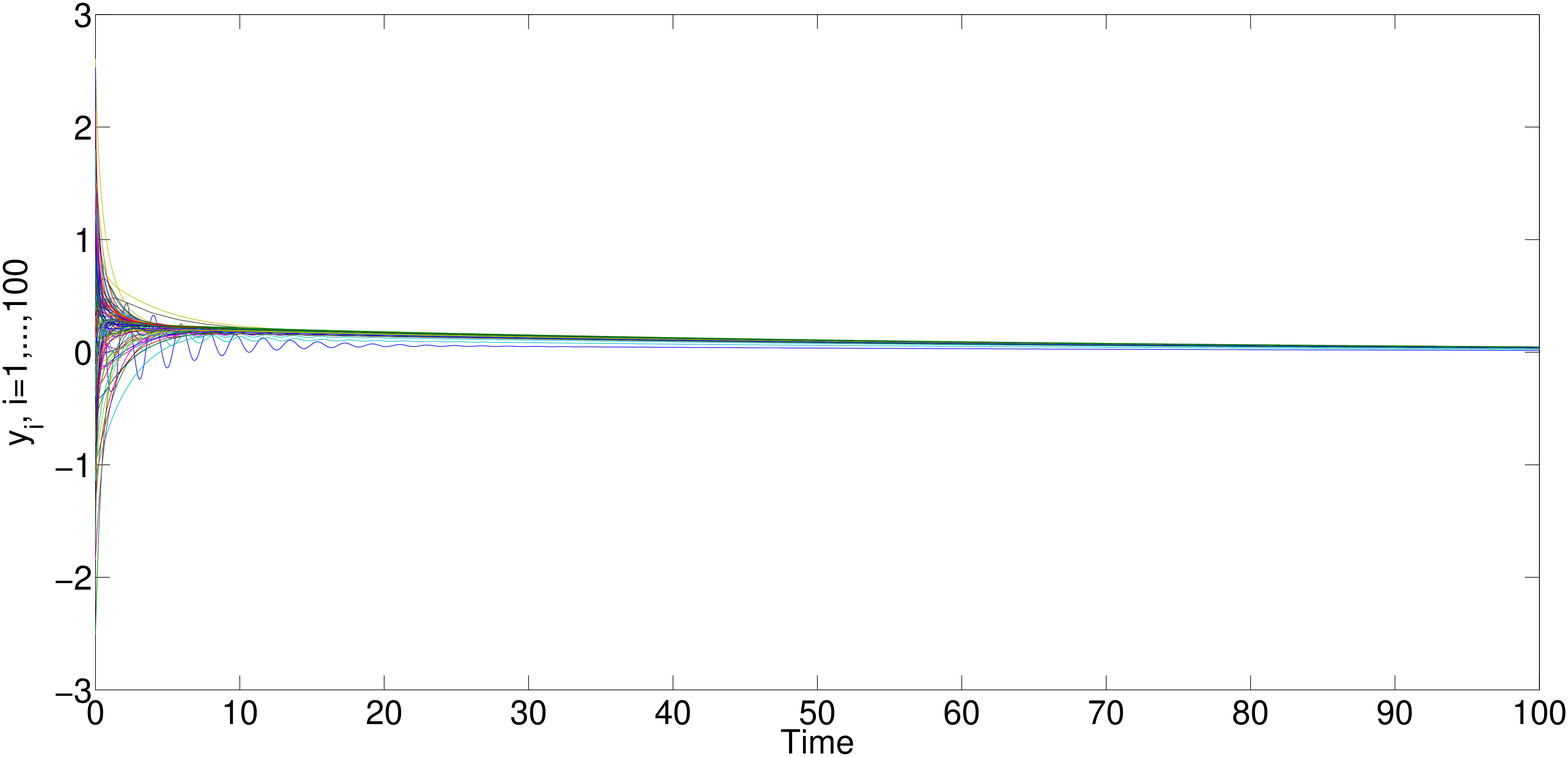}
}
\subfigure[$c=4.48$ and $\tau_{p}=0.9441$]{
\includegraphics[height=0.15\textwidth, width=.4\textwidth]
{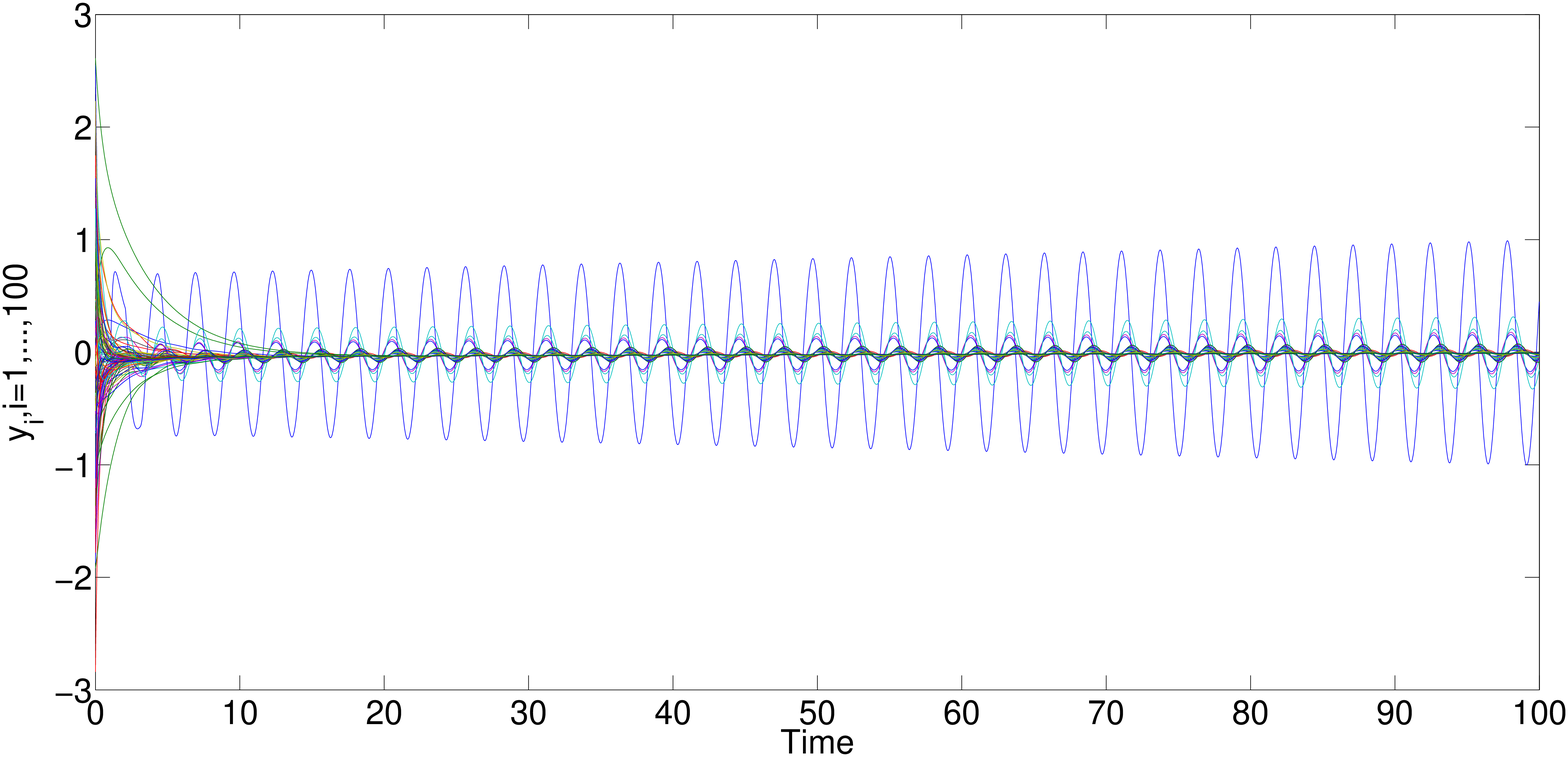}
}
\caption{(a) The stability region $\{(c,\tau_{p}):\tau_{p}<\tau_{p}^{M}\}$ in the parameter plane $(c,\tau_{p})$, where the dashed line depicts $\tau_p^M$ as a function of $c$. Direct simulation verifies that the system is indeed stable for the parameter values $c=4.48$ and $\tau_{p}=0.7724$ (b), and unstable for the slightly different values $c=4.48$ and $\tau_{p}=0.9441$ (c), corresponding to the blue and red stars, respectively, in subfigure (a).} \label{single_region}
\end{figure}

The other situation we consider is the homogeneous case when $L$ is diagonalisable and normalised, i.e., $L_{ii}=l$ $\forall i$
for some $l>0$, and
$\tau_{r}=\tau_{p}$. Then (\ref{eqnx1}) becomes
\begin{equation}
1+cu_{q}^{\top}((\lambda+l) I_{n}-A\exp(-\lambda\tau_{r}))^{-1}u_{q}\exp(-\lambda\tau_{p})=0.  \label{eqnx12}
\end{equation}
Let $L=QJQ^{-1}$;
thus $A=Q(l I_{n}-J)Q^{-1}$. Then, by the same algebra as above, (\ref{eqnx12}) becomes
\begin{equation}
1+c\sum_{k=1}^{n}\frac{\zeta_{k}\xi_{k}\exp(-\lambda\tau_{p})}{(\lambda+l)+(\theta_{k}-l)
\exp(-\lambda\tau_{p})}=0. \label{chsing1}
\end{equation}
We have the following result.

\begin{proposition}  \label{single_pin_thm2}
Suppose that $\tau_{r}=\tau_{p}$, $L$ is diagonalizable, irreducible, normalised ($L_{ii}=l$  $\forall i$), and all its eigenvalues $\{\theta_{i}\}$ are real.
Denote $\theta_{q}=0$ and let $\zeta=[\zeta_{1},\dots,\zeta_{n}]$ be the left eigenvector of $L$ corresponding to the eigenvalue $0$, with $\sum_{i}\zeta_{i}=1$.
Let $\mathcal S$ denote the set of all the branches of the solutions of
the equation
\begin{equation}
1+c\sum_{k=1}^{n}\frac{\zeta_{k}\xi_{k}}{\exp(- l\tau_{p})s/\tau_{p}+(\theta_{k}-l)}=0 \label{eqnx22}
\end{equation}
with respect to the variable $s$.
Then  system (\ref{pintransdelayM}) is stable whenever
the real parts of the numbers $\{\frac{W(s)}{\tau_{p}}-l:~s\in\mathcal S\}$ are all negative, where $W$ is the Lambert $W$ function \cite{Corless}.
\end{proposition}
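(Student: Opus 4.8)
The plan is to recognise that, when $\tau_r=\tau_p$, the scalar characteristic equation (\ref{chsing1}) is solvable in closed form through the Lambert $W$ function, and to use this to pin down the location of all of its roots. By Proposition~\ref{prop3}, whose hypotheses hold here, together with the factorisation (\ref{eqnxx1}), stability of (\ref{pintransdelayM}) is equivalent to every root $\lambda$ of (\ref{chsing1}) having $\Re(\lambda)<0$, since the complementary factor $\det(\lambda I_n+K-Ae^{-\lambda\tau_r})$ carries no root with nonnegative real part. Thus the whole problem reduces to locating the roots of (\ref{chsing1}).

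First I would bring (\ref{chsing1}) into Lambert form. Since $e^{\lambda\tau_p}\neq0$, multiplying the numerator and denominator of each summand by $e^{\lambda\tau_p}$ leaves the root set unchanged and yields
\[
1+c\sum_{k=1}^{n}\frac{\zeta_k\xi_k}{(\lambda+l)e^{\lambda\tau_p}+(\theta_k-l)}=0.
\]
Setting $\mu=(\lambda+l)\tau_p$ and $s=\mu e^{\mu}$, a short computation gives $(\lambda+l)e^{\lambda\tau_p}=e^{-l\tau_p}s/\tau_p$, so each denominator becomes $e^{-l\tau_p}s/\tau_p+(\theta_k-l)$ and the displayed equation turns into exactly (\ref{eqnx22}). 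The defining relation $s=\mu e^{\mu}$ is inverted by $\mu=W(s)$ across the branches of the Lambert $W$ function, so that $\lambda=\mu/\tau_p-l=W(s)/\tau_p-l$.

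The decisive step is then to establish an exact correspondence between the roots of (\ref{chsing1}) and the numbers $\{W(s)/\tau_p-l:s\in\mathcal S\}$, read over all branches of $W$. In one direction, a root $\lambda$ of (\ref{chsing1}) yields $s=\tau_p(\lambda+l)e^{(\lambda+l)\tau_p}$, which solves (\ref{eqnx22}) and hence lies in $\mathcal S$, with $\lambda=W_k(s)/\tau_p-l$ for the branch $k$ singled out by $\mu=(\lambda+l)\tau_p$. In the reverse direction, for each $s\in\mathcal S$ and each branch $k$, reversing the algebra above confirms that $\lambda=W_k(s)/\tau_p-l$ is again a root of (\ref{chsing1}). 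Hence the two sets coincide, and all roots of (\ref{chsing1}) have negative real part precisely when every $W(s)/\tau_p-l$, $s\in\mathcal S$, does, which is the asserted criterion and gives stability through the reduction above.

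The main obstacle I anticipate is treating the multivaluedness of $W$ rigorously: I must verify that the fibres of the map $\lambda\mapsto s=\tau_p(\lambda+l)e^{(\lambda+l)\tau_p}$ are exactly the branch values $\{W_k(s)\}_{k\in\mathbb Z}$, so that no root is lost and none is introduced spuriously. I would also dispose of the degenerate situations separately: values $s\in\mathcal S$ at which some denominator $e^{-l\tau_p}s/\tau_p+(\theta_k-l)$ vanishes (these are poles, excluded from $\mathcal S$, and correspond to singularities of the factor already handled by Proposition~\ref{prop3}), and the value $s=0$ giving $\lambda=-l$, all of which carry $\Re(\lambda)<0$ and so do not threaten the conclusion. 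Finally, since $\Re(W_k(s))\to-\infty$ as $|k|\to\infty$, only finitely many branches can violate negativity, so although the statement ranges over all branches it remains a finite check in practice.
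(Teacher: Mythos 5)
Your proposal is correct and follows exactly the route the paper takes: the paper's (one-sentence) proof consists of transforming (\ref{chsing1}) into (\ref{eqnx22}) via the substitution $s=\tau_{p}(\lambda+l)\exp(\tau_{p}(\lambda+l))$ and invoking Proposition~\ref{prop3}, which is precisely your argument. Your write-up simply fills in the details the paper leaves implicit---the algebra showing $(\lambda+l)e^{\lambda\tau_p}=e^{-l\tau_p}s/\tau_p$, the inversion $\lambda=W(s)/\tau_p-l$ over all branches of $W$, and the care taken that no roots are lost or created in the correspondence.
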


Proposition \ref{single_pin_thm2} can be proved by transforming (\ref{chsing1}) into (\ref{eqnx22}) with $s=\tau_{p}(\lambda+l)\exp(\tau_{p}(\lambda+l))$ and using Proposition \ref{prop3}.



\section{Small and large pinning strengths} \label{small_pinning}

In this section, we consider the extreme situations when the pinning strength $c$ is very small or very large.
We will employ the perturbation approach in \cite{Tre,Li} to approximate the eigenvalues and eigenvectors in terms of $c$.

The characteristic roots $\lambda$ of (\ref{chpintransdelay}) are eigenvalues of the matrix $\Sigma(c,\lambda)=-K+A\exp(-\lambda\tau_{r})-cD\exp(-\lambda \tau_{p})$.
Hence, when $c=0$, the characteristic roots of (\ref{chpintransdelay}) equal to the eigenvalues $\{\sigma_{i}\}$ of $\Sigma(0,\lambda)$.
Under the condition (H), there is a single eigenvalue $\sigma_{1}=0$. We denote the right and left eigenvectors of $\Sigma(0,\sigma_{i})$ by $\phi^{i}$ and ${\psi^{i}}^{\top}$ respectively, with ${\psi^{i}}^{\top}\phi^{i}=1$.
It can be seen that $\psi^{1}$ and $\phi^{1}$ (associated with $\sigma_{1}=0$) are, respectively, the right and left eigenvectors of $L$ associated with the zero Laplacian eigenvalue.

Let $\lambda_{i}(c)$
denote the characteristic roots of (\ref{chpintransdelay}) and  $\tilde{\phi}^{i}(c)$ and $\tilde{\psi}^{i}(c)$ denote the right and left eigenvectors of $\Sigma(c,\lambda_{i}(c))$, regarded as functions of $c$, with $\lambda_{i}(0)=\sigma_{i}$, $\tilde{\phi}^{i}(0)=\phi^{i}$ and $\tilde{\psi}^{i}(0)=\psi^{i}$. Using a perturbation expansion \cite{Tre,Li},
\begin{eqnarray*}
\lambda_{i}(c)&=&\sigma_{i}+\lambda_{i}^{1}c+o(c),~
\tilde{\phi}^{i}(c)=\phi^{i}+\phi^{i,1}c+o(c)\\
\tilde{\psi}^{i}(c)&=&\psi^{i}+\psi^{i,1}c+o(c)
\end{eqnarray*}
where $o(c)$ denotes terms that satisfy $\lim_{c\to 0}|o(c)|/c=0$. Thus,
\begin{eqnarray*}
&&[-K+A\exp(-\lambda_{i}(c)\tau_{r})-cD\exp(-\lambda_{i}(c)\tau_{p})]\tilde{\phi}^{i}(c)\\
&&=\lambda_{i}(c)\tilde{\phi}^{i}(c).
\end{eqnarray*}
When $c$ is sufficiently small, the dominant eigenvalue is $\lambda_{1}(c)$, since $\sigma_{1}=0$ is the dominant eigenvalue  when $c=0$. Hence, we consider $i=1$.
Then
$\exp(-\lambda_{1}(c)\tau) = 1-c\lambda_{1}^{1}\tau+o(c)$.
Comparing the first-order terms in $c$ on both sides, $
(-A\lambda_{1}^{1}\tau_{r}-D)\phi^{i}-L\phi^{i,1}=\lambda_{1}^{1}\phi^{i}$.
Multiplying both sides with ${\psi^{1}}^{\top}$ and noting that ${\psi^{1}}^{\top}\phi^{1}=1$,
\begin{equation}
\lambda_{1}^{1}=-\frac{{\psi^{1}}^{\top}D\phi^{1}}{1+\tau_{r}
({\psi^{1}}^{\top}A\phi^{1})}.\label{lambda11}
\end{equation}
Hence, we have the following result.

\begin{proposition}\label{thm4} Suppose that the underlying graph is strongly connected and at least one node is pinned.
Then, for sufficiently small $c$, all characteristic roots of (\ref{chpintransdelay}) have negative real parts and the dominant root is given by
\begin{equation}  \label{approxc}
\lambda_{1}(c)=-\frac{{\psi^{1}}^{\top}D\phi^{1}}{1+\tau_{r}
({\psi^{1}}^{\top}K\phi^{1})}c+o(c).
\end{equation}
\end{proposition}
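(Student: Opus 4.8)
The plan is to regard (\ref{chpintransdelay}) as a perturbation of its $c=0$ specialization and to continue the roots from $c=0$ to small $c>0$. At $c=0$, (\ref{chpintransdelay}) is the characteristic equation of the delayed consensus system $\dot y=-Ky+Ay(t-\tau_r)$, with characteristic function $\det M(\lambda)$, where $M(\lambda)=\lambda I_n+K-A\exp(-\lambda\tau_r)$. Running the Gershgorin argument from the first part of the proof of Proposition~\ref{thm1} with $c=0$ shows that the only root with $\Re(\lambda)\ge0$ is $\lambda=0$, and $\det M(0)=\det L=0$ confirms it is a root; thus $\sigma_1=0$ is the unique root in the closed right half-plane and hence the dominant one. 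The expansion preceding the statement already produces the branch $\lambda_1(c)$ issuing from $\sigma_1$ together with its first-order coefficient (\ref{lambda11}). Since $\phi^1$ is the right eigenvector of $\Sigma(0,0)=-L$ for eigenvalue $0$, we have $L\phi^1=0$, whence $A\phi^1=(K-L)\phi^1=K\phi^1$ and $\psi^{1\top}A\phi^1=\psi^{1\top}K\phi^1$; substituting this into (\ref{lambda11}) turns the denominator from $A$ into $K$ and yields precisely (\ref{approxc}).

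Next I would fix the sign of the leading coefficient by Perron--Frobenius. Irreducibility of $L$ gives the right null vector $\phi^1=[1,\dots,1]^\top$ and a strictly positive left null vector $\psi^1$ (so $\psi^{1\top}L=0$), normalized by $\psi^{1\top}\phi^1=1$. The numerator of (\ref{approxc}) is then $\psi^{1\top}D\phi^1=\sum_{i\in\mathcal D}\psi^1_i>0$, positive because at least one node is pinned and every $\psi^1_i>0$; the denominator is $1+\tau_r\sum_i\psi^1_iL_{ii}>0$, since all in-degrees $L_{ii}$ and all $\psi^1_i$ are nonnegative. Hence $\lambda_1^1$ is a strictly negative real number, and $\Re(\lambda_1(c))=\lambda_1^1c+o(c)<0$ for all sufficiently small $c>0$.

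The remaining and most delicate step is to confirm that the infinitely many other roots also stay in the open left half-plane and that $\lambda_1(c)$ is dominant. Here I would use that the left-hand side of (\ref{chpintransdelay}) is an entire exponential polynomial, so it has only finitely many zeros in any half-plane $\{\Re(\lambda)\ge-\epsilon\}$ and these depend continuously on $c$. Choosing $\epsilon>0$ so small that $\sigma_1=0$ is the only $c=0$ root in $\{\Re(\lambda)\ge-\epsilon\}$, a Rouch\'e/continuity argument keeps the zero count in this region equal to one for all small $c$, forcing the unique root there to be the branch $\lambda_1(c)$ with $\Re(\lambda_1(c))<0$, while every other root remains at real part below $-\epsilon/2$; this makes $\lambda_1(c)$ the dominant root and establishes stability. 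I expect this uniform control of the whole (infinite) spectrum to be the main obstacle, together with the justification that the branch $\lambda_1(c)$ is well defined and differentiable at $c=0$. The latter is exactly the simplicity of the root $\lambda=0$ at $c=0$, i.e. $\partial_\lambda\chi\neq0$ there; by Jacobi's formula $\chi'(0)$ equals a nonzero multiple of $1+\tau_r\psi^{1\top}K\phi^1$, the very (positive) denominator in (\ref{approxc}), which both legitimizes the expansion and ties the simplicity back to the computed coefficient.
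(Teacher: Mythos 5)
Your proof is correct and follows essentially the same route as the paper: continuation of the simple zero root $\sigma_1=0$ from $c=0$ via the perturbation expansion (\ref{lambda11}), followed by the identity ${\psi^{1}}^{\top}(-K+A)\phi^{1}=0$ to replace $A$ by $K$ in the denominator. The additional details you supply --- the Gershgorin argument showing $\lambda=0$ is the unique root in the closed right half-plane at $c=0$, the Perron--Frobenius positivity giving $\lambda_{1}^{1}<0$, the Rouch\'e/finiteness argument controlling the remaining (infinitely many) roots, and the simplicity of the zero root via Jacobi's formula --- are precisely the steps the paper's terse proof leaves implicit, and they are all sound.
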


\begin{proof} Since the graph is strongly connected, $L$ has a simple zero eigenvalue. When $c=0$, the dominant root of (\ref{chpintransdelay}) is $\sigma_{1}=\lambda_{1}(0)$.
Since the roots of (\ref{chpintransdelay}) depend analytically on $c$, they are given by $\lambda_{1}(c)$ for all sufficiently small $c$.  Substituting (\ref{lambda11}) into $\lambda_{1}(c)$ and noting that ${\psi^{1}}^{\top}(-K+A)\phi^{1}=0$ completes the proof.
\end{proof}

In order to understand the meaning of (\ref{approxc}), consider the special case of an undirected graph with binary adjacency matrix $A$.
Then, with $\phi^{1}=[1,\dots,1]^{\top}$ and $\psi^{1}=[1,\dots,1]^{\top}/n$,
we have  ${\psi^{1}}^{\top}K\phi^1=\sum_{i=1}^{n}L_{ii}/n$, which equals the {\em average degree} of the graph.
In addition, ${\psi^{1}}^{\top}D\phi^{1}=\sum_{i=1}^{n}\delta_{\mathcal D}(i)/n$, which is the {\em fraction of pinned agents}.
Then, (\ref{approxc}) yields the approximation
\begin{equation}
\lambda_{1}(c)\approx-\frac{\textrm{Pinning~Fraction}}{1+\tau_{r}
\times\textrm{Mean~Degree}}c\label{approxc1}
\end{equation}
for small $c$, which uses only the pinning fraction and the mean degree of the graph. Since the real part of the dominant characteristic value measures the exponential convergence of the system,
Proposition \ref{thm4} implies that, for sufficiently small $c$, the convergence rate is improved if
the number of pinned nodes is increased, the transmission delay is reduced, or the mean degree is decreased. If the graph is directed,
a similar statement can be obtained by taking the components of $\psi^{1}$ as weights: $\psi^{1}D\phi^{1}=\sum_{j=1}^{n}\psi^{1}_{j}\delta_{\mathcal D}(j)$.

To illustrate this result, we employ a numerical method to calculate the real part of $\lambda_{1}(c)$, namely, by simulating the  system \eqref{pintransdelayM} and expressing its exponential convergence rate in terms of its largest Lyapunov exponent. In detail, letting $\tau_{m}=\max\{\tau_{r},\tau_{p}\}$, we partition time into disjoint intervals of length $\tau_{m}$, $t_{k}=k\tau_m$, and define $\eta_{k}(\theta)=y(t_{k}+\theta)$ for  $\theta\in[0,\tau_{m}]$. Then, the largest Lyapunov exponent, which equals to the largest real part of solutions of (\ref{chpintransdelay}), is numerically calculated via \cite{Farmer}
\begin{equation}
\Re(\lambda_{1,\mathrm{sim}})=\lim_{N\to\infty}\frac{1}{N\tau_{m}}\log\|\eta_{N}\|
=\lim_{N\to\infty}\frac{1}{N\tau_{m}}\sum_{k=1}^{N}\log
\frac{\|\eta_{k}\|}{\|\eta_{k-1}\|},  \label{LE}
\end{equation}
where $\|\cdot\|$ stands for the function norm. The latter is numerically calculated by approximating $\eta_{k}(\cdot)$ with a finite-dimensional vector  $\varphi_{k}$ obtained by evaluating $\eta_k$ at a finite number of equally spaced points and using the vector norm  $\|\varphi_{k}\|$.
The estimate \eqref{LE} can then be compared with the analytical estimate for $\Re(\lambda_{1})$ obtained from (\ref{approxc}):
\begin{equation}
\Re(\lambda_{1,\mathrm{est}})=-\frac{{\psi^{1}}^{\top}D\phi^{1}}{1+\tau_{r}
({\psi^{1}}^{\top}K\phi^{1})}c.
\label{est}
\end{equation}

For simulations, we generate an undirected E-R random graph of $n=100$ nodes with linking probability $p=0.03$
and randomly select a given fraction $f$ of them as the pinned nodes.
The pinning delay is taken as $\tau_{p}=0.1$.
Figure \ref{Fig4_small_c} shows that the simulated value of $\Re(\lambda_{1})$ decreases almost linearly with respect to $c$ and $f$, and increases with respect to $\tau_{r}$ and the mean degree. The simulation results are in a good agreement with the theoretical results. The error between $\Re(\lambda_{1,\mathrm{est}})$ and $\Re(\lambda_{1,\mathrm{sim}})$ depends on the values of $\lambda_{1}^{1}$ and $c$. It can be seen that the error will increase as $c$ or $\lambda_{1}^{1}$
(or equivalently, $f$) increases, or else as the mean degree or $\tau_{r}$ decreases.

\begin{figure}[htp]
\centering
\subfigure[Variation of $\Re(\lambda_{1})$ w.r.t. $c$][]
{
\includegraphics[width=.2\textwidth,height=.1\textwidth]
{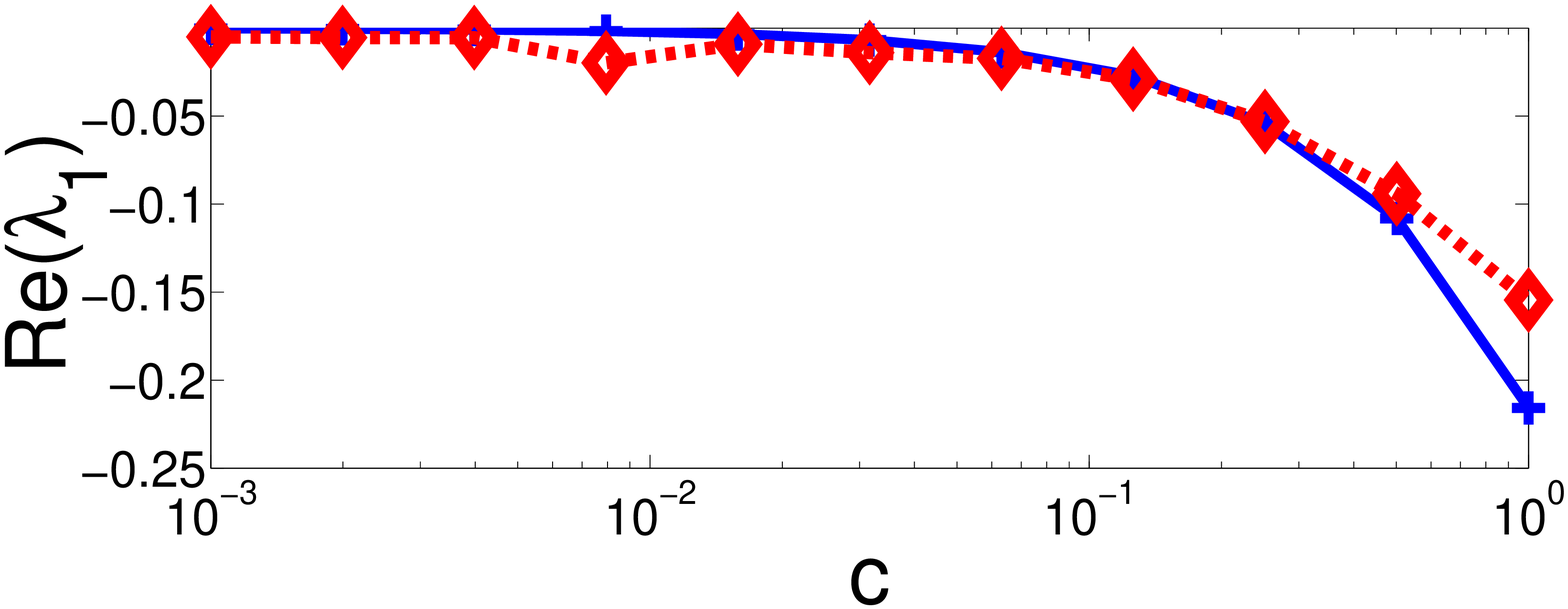}
}
\subfigure[Variation of $\Re(\lambda_{1})$ w.r.t. $f$][]{
\includegraphics[width=.2\textwidth,height=.1\textwidth]
{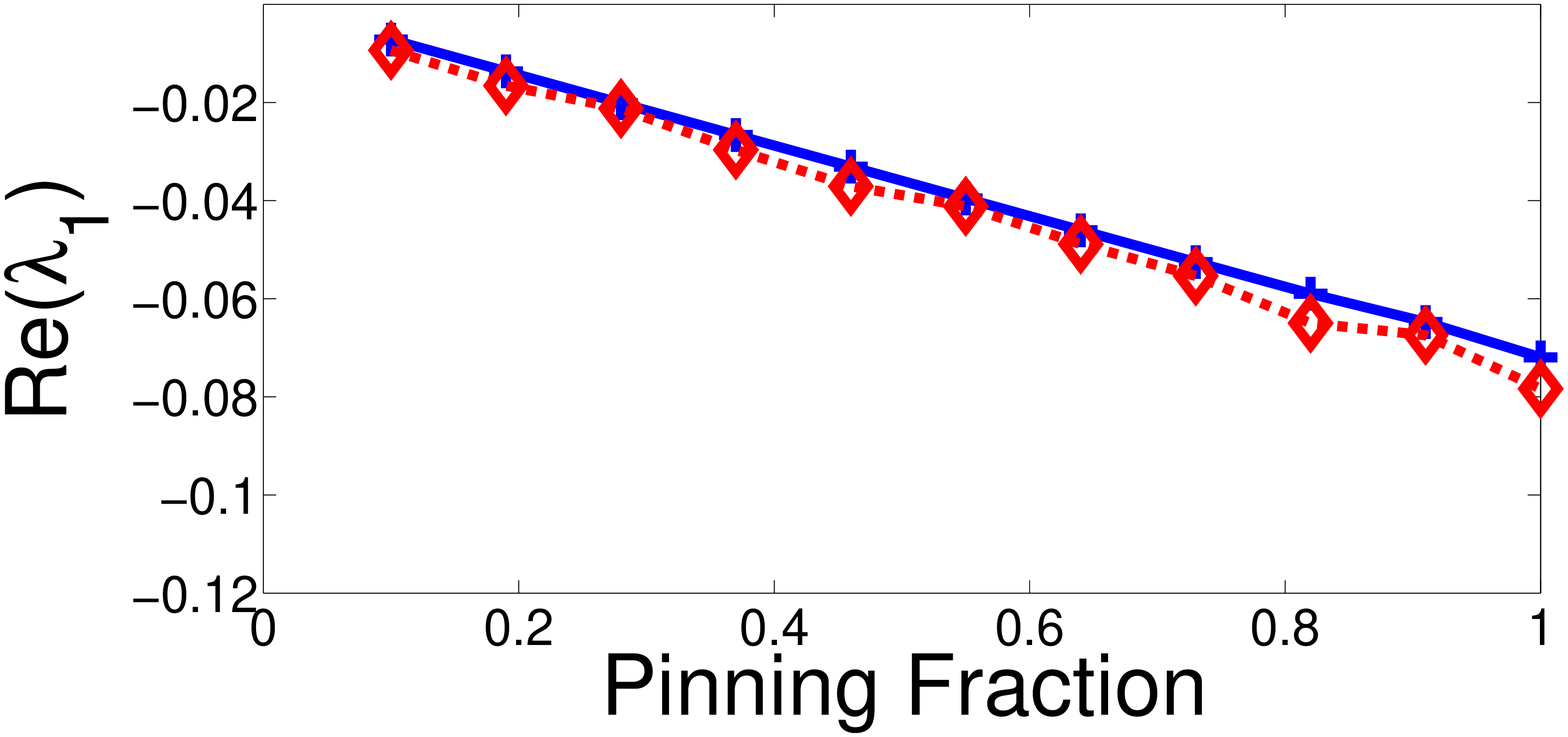}
}
\subfigure[Variation of ${\Re}(\lambda_{1})$ w.r.t. the mean degree][]{
\includegraphics[width=.2\textwidth,height=.1\textwidth]
{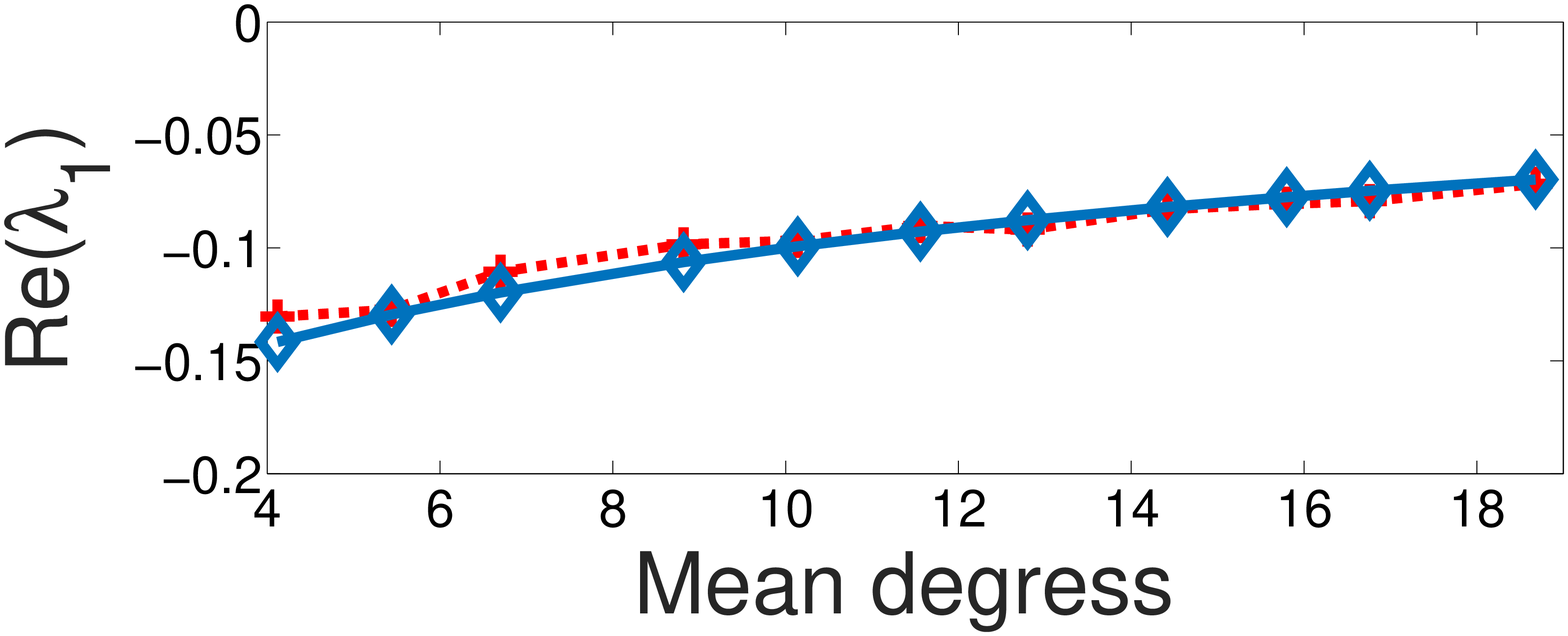}
}
\subfigure[Variation of ${\Re}(\lambda_{1})$ w.r.t. $\tau_{r}$][]{
\includegraphics[width=.2\textwidth,height=.1\textwidth]
{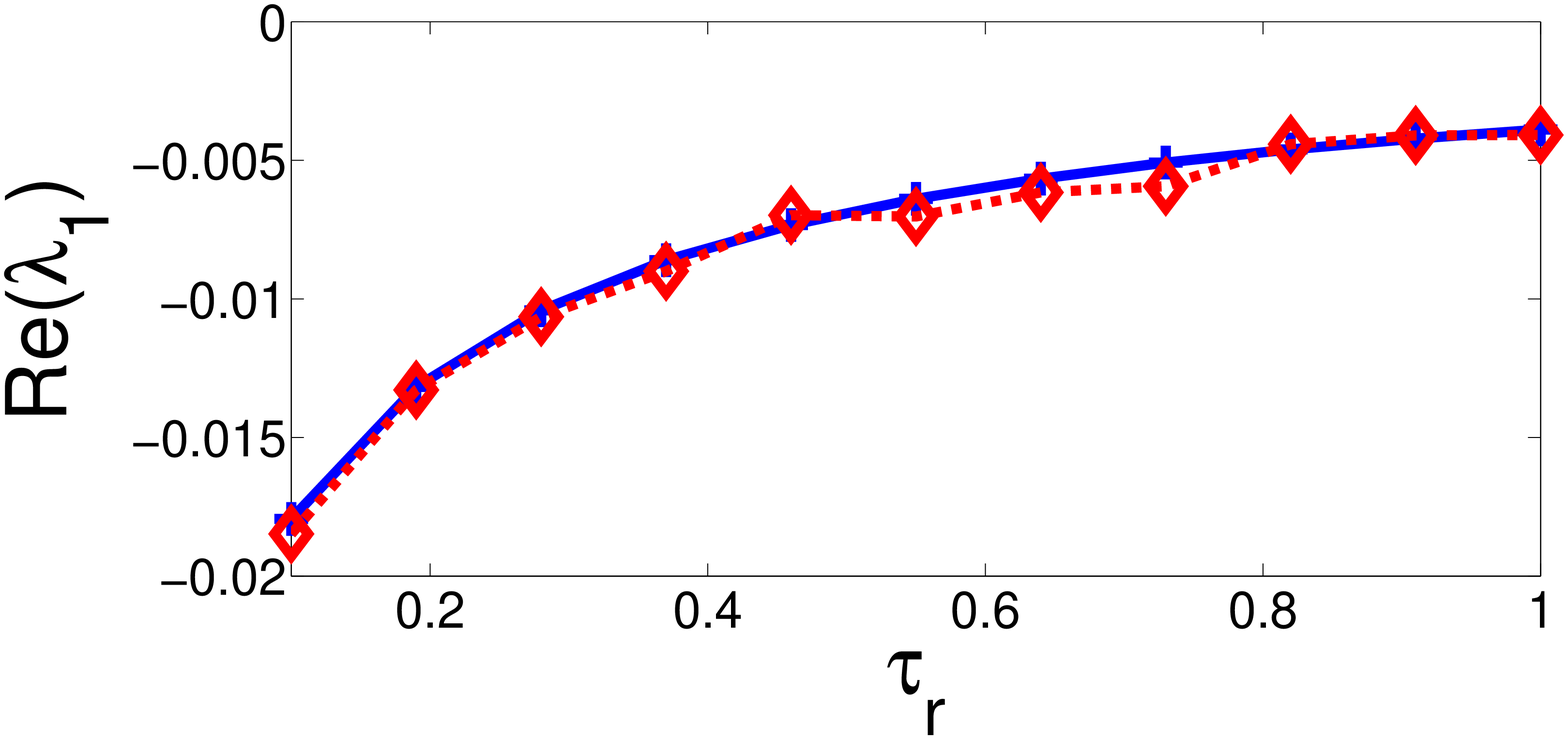}
}
\caption{Variation of ${\Re}(\lambda_{1})$ with
system parameters. The estimate \eqref{est} (plotted with {\color{blue} $+$}) shows good agreement with the values obtained via simulation and \eqref{LE} (plotted with {\color{red} $\diamond$}).
The parameters that are kept fixed are:
(a) $f=0.3$, $\tau_{r}=0.1$, mean degree $=3.4$;
(b) $c=0.1$, $\tau_{r}=0.1$, mean degree $=3.4$;
(c) $c=0.1$, $f=0.3$, $\tau_{r}=0.1$;
(d) $c=0.1$, $f=0.3$, mean degree $=3.4$.}
\label{Fig4_small_c}
\end{figure}

Next, we consider the case of large $c$. Letting $\epsilon=1/c$ and $\mu=\lambda/c$, (\ref{chpintransdelay}) is rewritten as
\begin{equation}
\det\left[\mu I_{n}+\epsilon K-\epsilon A\exp(-\mu \tau_{r}/\epsilon)
+D\exp(-\mu\tau_{p}/\epsilon)\right]=0.  \label{large_tau_ch}
\end{equation}
By the foregoing results, one can see that when $\epsilon$ is sufficiently small, equivalently, $c$ is sufficiently large, the largest admissible pinning delay for \eqref{pintransdelayM} approaches zero.
It is therefore natural to assume that $\tau_{p}$ depends on $c$ in such a way that $\tau_{p}c$ is bounded as $c$ grows large. Thus, we assume that $\tau_{pc}:=\tau_{p} c$ remains bounded as $c \to\infty$.

When $\epsilon=0$,
(\ref{large_tau_ch}) becomes approximately $
\dot{x}=-Dx(t-\tau_{p\infty})$,
where $\tau_{p\infty}$ can be any value between $\overline{\lim}_{c\to\infty}\tau_{pc}$ and  $\underline{\lim}_{c\to\infty}\tau_{pc}$. 
In terms of components, $
\dot{x}_{i}=
-x_{i}(t-\tau_{p\infty})$ if  $i\in\mathcal D$, and $0$ otherwise.
The characteristic equation
(\ref{large_tau_ch}) with $\epsilon=0$ can be written as
\begin{equation}
(\mu+\exp(-\mu\tau_{p\infty}))^{m}\mu^{n-m}=0\label{large_tau_ch_esp_0}
\end{equation}
where $m=|\mathcal{D}|$. It is known that $\Re(\mu)<0$ for all roots of the function $\mu \mapsto \mu+\exp(-\mu\tau_{p\infty})$ if and only if $\tau_{p\infty}<\frac{\pi}{2}$. Therefore, we impose the condition: $
\tau_{p} \, c<\frac{\pi}{2}$.

Thus, the largest real part of the solutions of (\ref{large_tau_ch_esp_0}) is zero, and is obtained for the solution $\mu=0$. The corresponding eigenspace
has dimension $n-m$ and has the form
\begin{equation*}
ES=\{u=[u_{1},\dots,u_{n}]^{\top}\in\mathbb R^{n}:~u_{i}=0,~\forall~i\in\mathcal D\}.
\end{equation*}
Without loss of generality, we assume $\mathcal D=\{1,\dots,m\}$.
Thus, we consider perturbation in terms of $\epsilon$ near zero eigenvalues $\mu_{i}$ and its corresponding right and left vectors, $\xi^{i},{\zeta^{i}}^{\top} \in ES$
such that $(\zeta^{i})^{\top}\xi^{i}=1$ and $(\zeta^{j})^{\top}\xi^{i}=0$ if $i\ne j$, $i,j=m+1,\dots,n$.
Let $\mu_{i}(\epsilon)$ stand for the perturbed solution of (\ref{large_tau_ch}), $\tilde{\xi}^{i}(\epsilon)$
and $\tilde{\zeta}^{i}(\epsilon)$ be the corresponding right and left eigenvectors, respectively. By a perturbation expansion,
\begin{eqnarray}
\mu_{i}(\epsilon)&=&\mu^{1}_{i}\epsilon+o(\epsilon),~
\tilde{\xi}^{i}(\epsilon)=\xi^{i}+\xi^{i,1}\epsilon+o(\epsilon)\nonumber\\
\tilde{\zeta}^{i}(\epsilon)&=&\zeta^{i}+\zeta^{i,1}\epsilon+o(\epsilon)\label{large_perturb}
\end{eqnarray}
as $\epsilon\to 0$. Thus, from (\ref{large_tau_ch}),
\begin{eqnarray*}
&&\left[-\epsilon K+\epsilon A\exp\left(-\mu_{i}(\epsilon)\frac{\tau_{r}}{\epsilon}\right)-D\exp(-\mu_{i}(\epsilon)\tau_{Pc})
\right]\tilde{\xi}^{i}(\epsilon)\\
&&=\mu_{i}(\epsilon)\tilde{\xi}^{i}(\epsilon).
\end{eqnarray*}
Since $\exp(-\mu_{i}(\epsilon)\tau)=1-\epsilon\mu_{i}^{1}\tau+o(\epsilon)$, by comparing the coefficients of order $1$, we have
\begin{eqnarray}
[-K+A\exp(-\mu^{1}_{i}\tau_{r})]\xi^{i}-D\xi^{i,1}=\mu^{1}_{i}\xi^{i}.\label{large_order_1}
\end{eqnarray}
We write
\begin{equation*}
K=\left[\begin{array}{ll}K_{1}&0\\
0&K_{2}\end{array}\right], \;
A=\left[\begin{array}{ll}A_{11}&A_{12}\\
A_{21}&A_{22}\end{array}\right], \;
D=\left[\begin{array}{ll}I_{m}&0\\
0&0\end{array}\right],
\end{equation*}
and $
\xi^{i}=[{\xi^{i}_{1}}^{\top}$, ${\xi^{i}_{2}}^{\top}]^{\top}$, $ \xi^{i,1}=[{\xi^{i,1}_{1}}^{\top},{\xi^{i,1}_{2}}^{\top}]^{\top}$,
with $K_{1}$, $A_{11}$, $\xi^{i}_{1}=0$ and $\xi^{i,1}_{1}$ corresponding to the pinned subset $\mathcal D$ of dimension $m$. Then (\ref{large_order_1}) becomes
\begin{equation}
\begin{cases}
[-K_{2}+A_{22}\exp(-\mu^{1}_{i}\tau_{r})]\xi^{i}_{2} &=\mu^{1}_{i}\xi^{i}_{2}\\
\exp(-\mu^{1}_{i}\tau_{r}) A_{12}\xi^{i}_{2}-\xi^{i,1}_{1} &= 0.
\end{cases}
\label{large_order_1x}
\end{equation}
We have the following result.

\begin{proposition}\label{thm5}
Suppose that the underlying graph is strongly connected and at least one node is pinned. Fix $\tau_{r}\ge 0$,  and suppose $\tau_{p} c<\frac{\pi}{2}$ as $c \to \infty$.  Then the dominant root of (\ref{large_tau_ch}) has the form
\begin{equation}
\lambda(c)=\mu^{1}_{*}+o(1) \; \text{ as } c\to\infty, \label{lambda_large_c}
\end{equation}
where $\mu^{1}_{*}$ is the dominant eigenvalue of the delay-differential equation
\begin{equation}
\dot{y}=-K_{2}y(t) + A_{22}y(t-\tau_{r}). \label{large_c_equ}
\end{equation}
Furthermore, $\Re (\lambda(c))<0$ for all sufficiently large $c$.
\end{proposition}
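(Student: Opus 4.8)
The plan is to exploit the singular structure of (\ref{large_tau_ch}) as $\epsilon=1/c\to 0$, splitting the spectrum into \emph{slow} roots that stay bounded (these carry the dominant eigenvalue) and \emph{fast} roots that escape to the left. The entry point is the factorization already recorded in (\ref{large_tau_ch_esp_0}): at $\epsilon=0$ the roots are $\mu=0$ of multiplicity $n-m$ (the space $ES$) together with the roots of $\mu+e^{-\mu\tau_{p\infty}}=0$, each of multiplicity $m$. Because $\tau_p c<\frac{\pi}{2}$ forces $\tau_{p\infty}<\frac{\pi}{2}$, the latter all have real part bounded strictly below $0$; since $\lambda=c\mu$, a root $\mu\approx\mu_0$ with $\Re(\mu_0)<0$ gives $\Re(\lambda)\approx c\,\Re(\mu_0)\to-\infty$, so these cannot be dominant. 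Hence the dominant root of (\ref{large_tau_ch}) must arise from the perturbation of the $(n-m)$-fold root $\mu=0$, i.e. the family $\mu_i(\epsilon)$ treated in (\ref{large_perturb})--(\ref{large_order_1x}).

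First I would read off the leading order. Writing $\lambda_i(c)=\mu_i(\epsilon)/\epsilon=\mu_i^1+o(1)$ and invoking the first line of (\ref{large_order_1x}), namely $[-K_2+A_{22}e^{-\mu_i^1\tau_r}]\xi_2^i=\mu_i^1\xi_2^i$, identifies each $\mu_i^1$ as a characteristic root of the reduced delay system (\ref{large_c_equ}); taking the rightmost such root gives $\mu_*^1$ and establishes (\ref{lambda_large_c}).

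The substantive second claim is that $\Re(\mu_*^1)<0$, i.e. that (\ref{large_c_equ}) is asymptotically stable for every $\tau_r\ge 0$, and I would prove it by mirroring Proposition~\ref{thm1}. At $\tau_r=0$ the characteristic matrix is $\mu I_{n-m}+L_{22}$, where $L_{22}=K_2-A_{22}$ is the principal submatrix of the Laplacian obtained by deleting the pinned indices. Since the graph is strongly connected, $L$ is an irreducible singular M-matrix, so deleting at least one pinned index leaves $L_{22}$ a nonsingular M-matrix whose eigenvalues all have positive real part (equivalently $-L_{22}$ is Hurwitz) — the grounded-Laplacian fact underlying Proposition~\ref{thm0}; this settles $\tau_r=0$. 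For $\tau_r>0$ I would exclude purely imaginary roots: if $\mu=\jj\omega$ solves $\det[\jj\omega I+K_2-A_{22}e^{-\jj\omega\tau_r}]=0$, the Gershgorin disc theorem gives $|\jj\omega+(K_2)_{ii}|\le\sum_{j\ne i}(A_{22})_{ij}\le(K_2)_{ii}$ for some unpinned $i$, forcing $\omega=0$; but $\mu=0$ is impossible because $\det L_{22}\ne 0$. With no imaginary roots for any $\tau_r$ and stability at $\tau_r=0$, the continuation principle \cite[Theorem 2.1]{RW} yields $\Re(\mu)<0$ for all roots and all $\tau_r\ge 0$, hence $\Re(\mu_*^1)<0$ and $\Re(\lambda(c))<0$ for large $c$.

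The main obstacle is making the slow/fast separation rigorous rather than formal. The expansion (\ref{large_perturb}) only produces the $n-m$ roots emanating from $\mu=0$, so to conclude that the genuine dominant root of the \emph{full} equation equals $\mu_*^1+o(1)$ I must show that no root slips into $\{\Re(\lambda)\ge 0\}$ from elsewhere. I would argue in the original variable $\lambda$: a block Schur-complement computation, using $e^{-\lambda\tau_p}\to 1$ because $\tau_p=\tau_{pc}/c\to 0$, shows that $c^{-m}\chi(\lambda)\to\det[\lambda I_{n-m}+K_2-A_{22}e^{-\lambda\tau_r}]$ uniformly on compact $\lambda$-sets, so by Rouch\'e the bounded roots of $\chi$ converge to the reduced spectrum and none can satisfy $\Re(\lambda)\ge 0$ for large $c$; the remaining roots with $\Re(\lambda)\ge 0$ and $|\lambda|$ large are ruled out by the $\mu$-scaled Gershgorin estimate together with (\ref{large_tau_ch_esp_0}) and $\tau_{p\infty}<\frac{\pi}{2}$. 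Reconciling the two scales cleanly — guaranteeing the rightmost root is bounded and is exactly the one captured by the perturbation — is the delicate step; the remainder is the M-matrix fact and a routine Gershgorin-plus-continuation argument.
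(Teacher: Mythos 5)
Your proposal is correct and takes essentially the same approach as the paper: a perturbation expansion in $\epsilon=1/c$ around the $(n-m)$-fold zero root of (\ref{large_tau_ch_esp_0}), identification of the first-order terms $\mu_i^1$ with the characteristic roots of the reduced system (\ref{large_c_equ}) via (\ref{large_order_1x}), and negativity of $\Re(\mu_*^1)$ from the diagonal dominance of $-K_2+A_{22}$ under condition (H). Your added rigor --- the Gershgorin/M-matrix/continuation argument for the reduced system and the Schur-complement/Rouch\'e justification of the slow/fast separation --- simply fills in steps that the paper's proof asserts without detail.
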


\begin{proof}
The condition $\tau_{pc}<\pi/2$ implies that, when $\epsilon=0$, the dominant root of the characteristic equation (\ref{large_tau_ch}) is zero and corresponds to the eigenspace $ES$. So, for sufficiently small $\epsilon$,
the dominant root of equation (\ref{large_tau_ch}) and the corresponding eigenvector have the form (\ref{large_perturb}), where $\mu_{i}^{1}$ satisfies the first equation in (\ref{large_order_1x}), i.e., is an eigenvalue of (\ref{large_c_equ}).
Since $\lambda(\epsilon)=\mu/\epsilon$, (\ref{lambda_large_c}) follows.
Moreover, since $-K_{2}+A_{22}$ is diagonally dominant, one can see that $\Re(\mu^{1}_{i})<0$ under condition (H). Therefore, for sufficiently large $c$, all characteristic values of system (\ref{pintransdelay}) have negative real parts.
\end{proof}

We note that $\mu_{*}^{1}$ depends only on the coupling structure of the uncoupled nodes. To illustrate this result, we consider examples with a similar setup as in Sec.~\ref{small_pinning}. We take an E-R graph with $n=100$ nodes and linking probability $p=0.03$, and pin $m=30$ nodes.
We set $\tau_{r}=0.1$ and $\tau_{p}=\frac{1}{c}$.
The real part of the dominant characteristic root of (\ref{chpintransdelay}) is numerically calculated via the largest Lyapunov exponent, using formula (\ref{LE}). Its theoretical estimation comes from Theorem \ref{thm5}: $
{\Re}(\lambda_{1,\mathrm{est}})=\max\left\{{\Re}(\mu^{1}) :\ \det\left(\mu^{1}I_{m}+K_{2}-A_{22}\exp(-\mu^{1}\tau_{r})\right)=0\right\}$,
where the largest real part of $\mu^{1}$ is similarly calculated from the largest Lyapunov exponent of (\ref{large_c_equ}).
Fig. \ref{Fig5} shows that as $c$ grows large, the real part of the dominant root of (\ref{chpintransdelay}) obtained from simulations approach the theoretical result ${\Re}(\lambda_{1,\mathrm{est}})$, thus verifying Proposition~\ref{thm5}.

\begin{figure}[!t]
\centering
{
\includegraphics[width=.4\textwidth]
{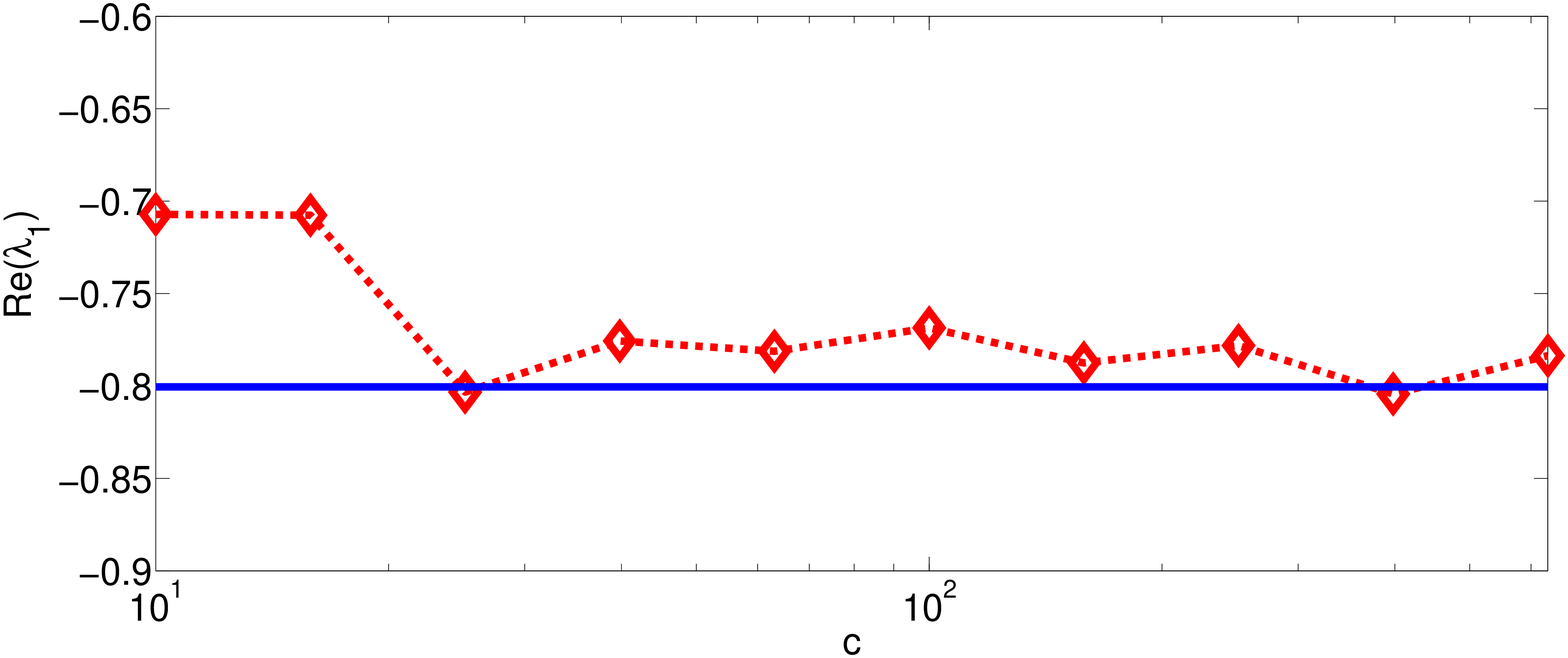}
}
\caption{Variation of ${\Re}(\lambda_{1})$ with large values of $c$, calculated for $f=0.3$, $\tau_{r}=\tau_{p}=0.1$, and mean degree $=3.4$. The estimation ${\Re}(\lambda_{1,\mathrm{est}})$ is plotted by the blue solid line and the real values  by the dash line with red $\diamond$.}\label{Fig5}
\end{figure}


We have shown in this paper that the stability of the multi-agent systems with a local pinning strategy and transmission delay may be destroyed by sufficiently large pinning delays.  Using theoretical and numerical methods, we have obtained an upper-bound for the delay value such that the system is stable for any pinning delay less than this bound. In this case, the exponential convergence rate of the multi-agent, which equals the smallest nonzero real part of the eigenvalues of the characteristic equation, measures the control performance.

\end{document}